\newif\ifblog
\newif\iftex
\newcommand{\lemref}[1]{Lemma~{\rm \ref{#1}}}
\newcommand{\corref}[1]{Corollary~{\rm \ref{#1}}}
\def\E{{\mathbb E}}
\def\R{{\mathbb R}}
\def\R{{\mathbb R}}
\newcommand{\nd}{\noindent}
\newcommand{\bed}{\begin{displaymath}}
\newcommand{\eed}{\end{displaymath}}
\newcommand{\bea}{\bed\begin{array}{rl}}
\newcommand{\eea}{\end{array}\eed}
\newcommand{\barray}{\begin{array}{ll}}
\newcommand{\earray}{\end{array}}
\newtheorem{theorem}{Theorem}[section]
\newtheorem{lemma}[theorem]{Lemma}
\newtheorem{corollary}[theorem]{Corollary}
\newtheorem{remark}[theorem]{Remark}
\newenvironment{proof}{\noindent {\sc Proof:}}{\strut\hfill $\Box$\newline}
\DeclareMathOperator*{\argmax}{\mathrm{argmax}}
\renewcommand{\geq}{\geqslant}
\renewcommand{\leq}{\leqslant}
\title{A consumption-investment model \\ with state-dependent lower bound 
constraint\\
on consumption}
\date{December 2021}
\author{Chonghu Guan\thanks{School of Mathematics, Jiaying University, Meizhou 514015, Guangdong,
China. This author is partially supported by Guangdong Basic and Applied Basic Foundation (No. 2021A1515012031) and NNSF of China (No. 11901244). Email: {316346917@qq.com}.} 
\and Zuo Quan Xu\thanks{Department of Applied Mathematics, The Hong Kong Polytechnic University, Kowloon, Hong Kong. This author is partially supported by NSFC (No.~11971409), Hong Kong GRF (No.~15204216 and No.~15202817), The PolyU-SDU Joint Research Center on Financial Mathematics and the CAS AMSS-PolyU Joint Laboratory of Applied Mathematics, The Hong Kong Polytechnic University. Email: {maxu@polyu.edu.hk}.}
\and Fahuai Yi\thanks{School of Mathematical Sciences, South China Normal University, Guangzhou, China. This author is partially supported by NNSF of China (No.~12171169). Email: fhyi@scnu.edu.cn. }
}
\begin{document}
\maketitle

\begin{abstract} 
This paper studies a life-time consumption-investment problem under the Black-Scholes framework, where the consumption rate is subject to a lower bound constraint that linearly depends on her wealth. 
It is a stochastic control problem with state-dependent control constraint to which the standard stochastic control theory cannot be directly applied. We overcome this by transforming it into an equivalent stochastic control problem in which the control constraint is state-independent so that the standard theory can be applied. We give an explicit optimal consumption-investment strategy when the constraint is homogeneous. 
When the constraint is non-homogeneous, it is shown that the value function is third-order continuously differentiable by differential equation approach, and a feedback form optimal consumption-investment strategy is provided. According to our findings, if one is concerned with long-term more than short-term consumption, then she should always consume as few as possible; otherwise, she should consume optimally when her wealth is above a threshold, and consume as few as possible when her wealth is below the threshold. 
\bigskip

\nd {\bf Keywords.} Dynamic programming; viscosity solution; optimal consumption-investment; dual transformation; state-dependent constraint

\bigskip

\nd {\bf Mathematics Subject Classification.} 35R35; 91G10; 93E20.

\end{abstract}

\section{Introduction}

The landmark paper \emph{Portfolio Selection} (1952) and book of the same title (1959) by the Nobel laureate Harry M. Markowitz heralded a new era in contemporary finance. His work, however, did not account for the influence of consumption. To address the importance of consumption, Samuelson (1969), Pratt (1964), Arrow (1965), Merton (1969) among many others proposed the optimal consumption-investment models. The goal of these models is to discover the best consumption-investment strategy to maximize the investor's expected utility throughout an investment horizon.
Merton (1975) believed that study of consumption-investment problems is a logical starting point for constructing finance theory. He developed a number of consumption-investment models, including multi-assets with log-normal and more general returns, wage income, and uncertain lifetimes. Merton employed the dynamic programming principle extensively to study these models. This paper will follow his routine. 
\par 
Following Samuelson and Merton's groundbreaking work, many researchers extended the optimal consumption-investment models to incorporate different limitations on trading tactics (see, e.g., Davis and Norman (1990), Fleming and Zariphopoulou (1991), Zariphopoulou (1992), Cvitani and Karatzas (1992, 1993), Zariphopoulou (1994), Shreve and Soner (1994), Akian, Menaldi, and Sulem (1996), Elie and Touzi (2008), Dai and Yi (2009)). 
For a system account, interested readers might refer to Sethi (1997).
\par
Meanwhile, many restrictions on consumption exist in real practice. Investment firms that have cash flow commitments, for instance, are frequently subject to regulatory capital limitations. 
Models with consumption constraints are barely discussed in the literature as compared to models with trading constraints. To integrate such limitations into consumption, Bardhan (1994) explored a model in which the investor must consume no less than a positive rate over the investment horizon. This is referred to as the subsistence consumption requirement. Lakner and Nygren (2006) used martingale method to solve a finite time portfolio selection problem with subsistence consumption requirement and downside constraints on the terminal wealth.
Xu and Yi (2016) looked at a similar model but the consumption rate is constrained by a state-dependent upper bound. The upper bound may be interpreted as fund manager's maximum performance dependent salary. 

\par
In this paper, we consider a consumption-investment model in which the investor's consumption rate is subject to a lower bound constraint that linearly depends on her wealth in a standard Black-Scholes market over a lifetime trading horizon. We make the usual assumption that shorting is allowed but bankruptcy is prohibited. The main tools we use are from stochastic control and ordinary differential equation (ODE, for short) theories (see, e.g., Crandall and Lions (1983), Lions (1983), Fleming and Soner (1992), Dai, Xu and Zhou (2010), Dai and Xu (2011), Chen and Yi (2012), Xu and Yi (2016)). The feedback form optimal consumption-investment strategy is provided. 

As is widely recognized, there is no well-established theory that deals with control problems with state-dependent constraints. The usual maximum principle and forward-backward stochastic differential equation (SDE) method in control theory cannot be directly applied to those problems. We overcome this obstacle by transforming the problem into an equivalent stochastic control problem that does not impose state-dependent constraint on the control variable, allowing the conventional technique to work. This allows us to show the value function is a viscosity solution to the Hamilton-Jacobi-Bellman (HJB) equation, as well as to establish the corresponding verification theorem. When the constraint is homogeneous, we can solve the HJB equation completely and give an explicit optimal strategy. When the constraint is non-homogeneous, we cannot solve the HJB equation, but we can show that the value function is third-order continuously differentiable, implying that a classical solution to the HJB equation exists so that the verification technique works. 

The most important economic contribution of this work is the discovery that the discount factor is critical in determining the optimal consumption habits when the constraint is non-homogeneous. If the discount factor is smaller than some threshold, namely the investor cares about long-term more than short-term consumption, then our result suggests that she should, regardless of her financial situation, always consume as few as possible and save the reminder for future investment, resulting in a higher future expected utility than spending today. By contrast, if she is primarily concerned with short-term consumption when the discount factor is bigger than some threshold, then she should consume optimally when her wealth is above a threshold, and consume as few as possible when her wealth is below the threshold. 

\par
The remainder of the paper is organized as follows. In Section 2, we formulate a consumption-investment problem with state-dependent constraint and resolve its feasibility issue. In Section 3, we study the homogeneous constraint case and give an explicit optimal trading strategy. Sections 4-7 focus on the non-homogeneous constraint case. 
Section 4 presents the related HJB equation and verification theorem. 
Section 5 studies the properties of the value function. In Section 6, we first perform a dual transformation to convert the {fully non-linear} HJB equation into a {semi-linear} one, and then use ODE techniques to prove that the solution to the dual equation is three times continuously differentiable, resulting in the same regularity of the value function. Section 7 discusses the optimal consumption-investment strategy and provide our financial findings. Section 8 concludes the paper and the solution with state-independent constraint is provided in Appendix A.

\setcounter{equation}{0}

%%%%%%%%%%%%%%%%%%%%%%%
\section{Problem formulation}
\setcounter{equation}{0}

%\subsection{The Investor's Problem}
We consider a small investor (``She'') who has a positive initial endowment $x$ and is trading in a complete, arbitrage-free, continuous-time Black-Scholes financial market. The market consists of two financial assets only: a risk-free asset with instantaneous interest rate $r$ and a risky asset whose price $S$ is governed by a stochastic differential equation: 
\begin{align}\label{eq:S}
\frac{dS_t}{S_t}=(r+\mu)dt+\sigma dW_t,
\end{align}
where $W$ is a standard Brownian motion and $r\geq 0$, $\mu>0$ and $\sigma>0$ are constants.
\par
An investment strategy $\pi_t$ represents the holding amount (in dollars) of the risky asset in the portfolio at time $t$ and a consumption strategy $c_t\geq 0$ represents the rate of consumption. We assume the trading and consumption are self-financing so that the wealth $X_t$ of the investor evolves according to the following SDE:
\begin{align}\label{X}
dX_t=(rX_t-c_t+\mu\pi_t)dt+\pi_t\sigma dW_t. 
\end{align} 
\par
We call a progressively measurable process $(c_t, \pi_t)$ an admissible consumption-investment strategy/control/pair if it satisfies, for all $t\geqslant 0$, 
\begin{align}\label{constraint1}
\E\bigg[\int_0^t(c_s+\pi_s^2)ds\bigg]<\infty, 
\end{align}
and 
\begin{align}\label{constraint2}
X_t\geqslant 0, \quad c_t\geqslant kX_t+l, 
\end{align}
where $k\geq 0$ and $l\geq 0$ are given constants. 
Given any admissible strategy $(c_t, \pi_t)$, the SDE \eqref{X} admits a unique integrable strong solution $X_t$.

\par
The investor's goal is to discover the best consumption-investment $(c_t, \pi_t)$ strategy to maximize her lifetime expected discounted utility: 
\begin{align}\label{V}
V(x):=\sup\limits_{\pi, c}\E\bigg[\int_0^\infty e^{-\beta t}U(c_t)dt\;\Big|\;X_0=x\bigg], 
\end{align}
where $\beta>0$ is a discount factor to ensure the above is finite. Following Merton's (1971) model, we take 
\begin{align*}
U(c)=\frac{c^p}{p}, \quad c\geqslant0,
\end{align*}
where $0<p<1$ is a constant.
Here $V$ is called the value function of the problem \eqref{V}. 
Economically speaking, a small $\beta$ means that the investor cares more about long-term than short-term consumption, and a large $\beta$ means the opposite. Later we will show it is critical in determining optimal consumption habits. 
\par 
Because the control constraint in \eqref{constraint2} is state-dependent, some well-known and powerful methods in control theory, such as the maximum principle and the forward-backward SDE approach, cannot be directly applied to solving the problem \eqref{V}. Our idea is to transform the problem into an equivalent stochastic control problem in which the control constraint is state-independent, allowing the conventional technique to work. 

%%%%%%%%%%%%%%%%%%%%%%%%%%%%%%%%%%%%%%%%%%
\subsection{Feasibility issue of the problem \eqref{V}}
\par
We start with the feasibility issue of the problem \eqref{V}, that is, whether it has at least one admissible strategy. 
If the problem is not feasible, then there is nothing to investigate. 

Let 
\[\gamma_t=e^{(k-r-\frac{1}{2}(\sigma^{-1}\mu)^2)t-\sigma^{-1}\mu W_t}.\]
By It\^{o}'s lemma, \eqref{X} and \eqref{constraint2},
\[d \gamma_t=\gamma_t\big((k-r)dt-\sigma^{-1}\mu dW_t\big)\]
and 
\begin{align*}
d (\gamma_t X_t)&=\gamma_tdX_t+X_td\gamma_t-\sigma^{-1}\mu\gamma_t\pi_t\sigma dt\\
&=\gamma_t\big((rX_t-c_t+\mu\pi_t)dt+\pi_t\sigma dW_t\big)
+X_t\gamma_t\big((k-r)dt-\sigma^{-1}\mu dW_t\big)-\sigma^{-1}\mu\gamma_t\pi_t\sigma dt\\
&=\gamma_t\big((kX_t-c_t)dt+\pi_t\sigma dW_t\big)\\
&\leq \gamma_t\big(-l dt+\pi_t\sigma dW_t\big)
\end{align*} 
for any admissible strategy $(c_t, \pi_t)$. Since $l\geq 0$, the above implies $\gamma_t X_t$ is a local suppermartingale. It is also nonnegative, so it is a suppermartingale. 
A localizing argument leads to 
\begin{align}\label{xgrowth} 
0\leq \E\big(\gamma_{t} X_{t}\big) & \leq x- l\E\big(\int_0^{t} \gamma_{s} ds\big)=x-l\int_0^{t} e^{(k-r)s} ds
\end{align} 
for any $t\geq 0$. 
So by the monotone convergent theorem, we must have 
\begin{align}\label{constraint4} 
x\geq l\int_0^{\infty} e^{(k-r)s} ds. 
\end{align} 
Therefore, \eqref{constraint4} is a necessary condition for the problem \eqref{V} being feasible. On the other hand, if \eqref{constraint4} holds, then one can check that $(c_t, \pi_t)=(kX_t+l, 0)$ is always an admissible strategy. 
Therefore, we conclude that the problem \eqref{V} is feasible if and only if \eqref{constraint4} holds.

If $k\geq r$, $l>0$, then \eqref{constraint4} cannot hold as its right hand side is $\infty$, so there is no admissible strategy for the problem \eqref{V}. 

We only need to deal with the left two cases: the homogeneous case $l=0$ and the non-homogeneous case $0\leq k<r$, $l>0$. In these two cases we adopt a viscosity solution plus verification theorem approach to tackle the problem \eqref{V}.

In the rest of this paper, we denote 
\[ \kappa:=\frac{\beta-p\left(\frac{\mu^2}{2\sigma^2(1-p)}+r\right)}{1-p}. \]

\begin{remark}
We point here that the value of $\kappa$ relays on the discount factor $\beta$. Later, we will see that it plays an important role in determining the optimal consumption-investment strategy.
\end{remark}

%%%%%%%%%%%%%%%%%%%%%%%%%%%%%%%%%%%%%%%%%%
\section{Solution in the homogeneous case $l=0$}
\setcounter{equation}{0}

We start with the homogeneous case $l=0$. In this case, we give an explicit optimal control for the problem \eqref{V}.
The result will play as a benchmark for the non-homogeneous case. In this case, the constraint \eqref{constraint4} is satisfied for any $x\geq 0$, so the problem \eqref{V} is feasible for any $x\geq 0$.

We first recall Merton's (1971) well-known result for the unconstrained case $k=l=0$. 
\begin{lemma}[Merton's Theorem]\label{Mertons}
If $k=l=0$, $\kappa>0$, then the optimal strategy to the problem \eqref{V} is 
\[ (c_t, \pi_t)=\bigg(\kappa X_t, \frac{\mu}{\sigma^2(1-p)}X_t\bigg), \quad t\geqslant 0, \]
and the optimal value $V(x)$ is equal to 
\[ V_0(x)=\frac{1}{p}\kappa^{p-1}x^p. \]
If $k=l=0$, $\kappa\leq 0$, then the optimal value is infinity. 
\end{lemma}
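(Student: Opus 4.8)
The plan is to treat this as a textbook verification result built on the stationary Hamilton--Jacobi--Bellman (HJB) equation. First I would write down, for the unconstrained case $k=l=0$, the HJB equation associated with \eqref{V},
\[
\beta V(x)=\sup_{c\geq 0,\,\pi}\Big\{\tfrac{c^p}{p}+(rx-c+\mu\pi)V'(x)+\tfrac12\sigma^2\pi^2 V''(x)\Big\},
\]
and look for a smooth, increasing, concave solution of the self-similar form $V(x)=\tfrac{A}{p}x^p$ dictated by the homogeneity of the power utility and the linearity of the wealth dynamics. The first-order conditions in $c$ and $\pi$ give the candidate feedback maximizers $c=(V')^{1/(p-1)}=A^{-1/(1-p)}x$ and $\pi=-\mu V'/(\sigma^2 V'')=\tfrac{\mu}{\sigma^2(1-p)}x$, using that $V''<0$. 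Substituting $V=\tfrac{A}{p}x^p$ and these maximizers back in, every term is proportional to $x^p$; after cancelling $x^p$ and invoking the definition of $\kappa$, I expect the algebra to collapse to the single relation $\kappa=A^{-1/(1-p)}$, whence $A=\kappa^{p-1}$ and $c=A^{-1/(1-p)}x=\kappa x$. This is exactly where the sign of $\kappa$ enters: a positive finite constant $A=\kappa^{p-1}$ and a nonnegative consumption rate $c=\kappa x$ exist only when $\kappa>0$, reproducing both the asserted strategy and the value $V_0(x)=\tfrac1p\kappa^{p-1}x^p$.

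Second, for $\kappa>0$ I would run the verification argument. For the upper bound, apply It\^{o}'s lemma to $e^{-\beta t}V_0(X_t)$ along an arbitrary admissible pair $(c_t,\pi_t)$; since $V_0$ solves the HJB equation, the drift is dominated by $-e^{-\beta t}U(c_t)$, so after localizing the stochastic integral, taking expectations, and using $V_0\geq 0$, I obtain $\E\big[\int_0^T e^{-\beta t}U(c_t)\,dt\big]\leq V_0(x)$; letting $T\to\infty$ by monotone convergence yields $V(x)\leq V_0(x)$. For the reverse inequality, I would plug the candidate feedback into \eqref{X}; the closed-loop wealth becomes a geometric Brownian motion, which is automatically admissible (it stays positive and has finite moments of every order), and along it the HJB drift inequality becomes an equality, so $e^{-\beta t}V_0(X_t)+\int_0^t e^{-\beta s}U(c_s)\,ds$ is a local martingale. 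Computing the relevant $p$-th moment of the log-normal $X_T$ explicitly, I expect $\E\big[e^{-\beta T}V_0(X_T)\big]$ to decay like $e^{-\kappa T}\to 0$, which is precisely the transversality condition needed to pass to the limit and conclude $\E\big[\int_0^\infty e^{-\beta t}U(c_t)\,dt\big]=V_0(x)$, proving optimality.

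Finally, for $\kappa\leq 0$ I would exhibit divergence directly through the one-parameter family of constant-proportion strategies $(c_t,\pi_t)=(\delta X_t,\tfrac{\mu}{\sigma^2(1-p)}X_t)$ with $\delta>0$. Each keeps wealth a geometric Brownian motion, and the same log-normal moment computation shows the objective equals $\tfrac{\delta^p}{p}x^p\int_0^\infty e^{-(\kappa(1-p)+p\delta)t}\,dt$. When $\kappa<0$ the exponent $\kappa(1-p)+p\delta$ can be driven to $0^+$ by choosing $\delta\downarrow-\kappa(1-p)/p$, and when $\kappa=0$ one instead sends $\delta\downarrow0$; in either case the value blows up to $+\infty$. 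I anticipate the only delicate point is the verification step for $\kappa>0$: justifying the localization and, above all, establishing the transversality estimate $\E[e^{-\beta T}V_0(X_T)]\to 0$ rigorously, since this is what rules out the spurious possibility that the explicit candidate is merely an upper bound rather than the true value. The moment computation for the log-normal closed-loop wealth is the technical heart that makes both this estimate and the $\kappa\leq0$ blow-up work.
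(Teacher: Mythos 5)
Your proposal is correct, but it is much more self-contained than what the paper actually does, and the two arguments are genuinely different. For $\kappa>0$ the paper offers no computation at all: it simply cites Merton (1971). Your HJB ansatz $V_0(x)=\frac{A}{p}x^p$, the feedback maximizers, the collapse to $A=\kappa^{p-1}$, and the verification step with the transversality estimate $\E\big[e^{-\beta T}V_0(X_T)\big]=\frac{\kappa^{p-1}x^p}{p}e^{-\kappa T}\to 0$ amount to reproving the cited result from first principles; the computation is sound (the closed-loop log-normal moment calculation does give the exponent $-\kappa T$, which is precisely where $\kappa>0$ enters). The real divergence is in the case $\kappa\leq 0$. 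The paper disposes of it in two lines by monotonicity: the value is decreasing in the discount factor $\beta$, and $\kappa$ is increasing in $\beta$, so the value for $\kappa\leq 0$ dominates $\frac{1}{p}(\kappa')^{p-1}x^p$ for every $\kappa'>0$, which tends to $+\infty$ as $\kappa'\downarrow 0$ since $p-1<0$. You instead exhibit explicit constant-proportion strategies whose value is $\frac{\delta^p x^p}{p\,(\kappa(1-p)+p\delta)}$ and drive the denominator to $0^+$; this is also correct (and for $\kappa<0$ even simpler than you state: the single admissible choice $\delta=-\kappa(1-p)/p>0$ makes the integrand constant in $t$, so one strategy already yields an infinite value, no limiting family needed). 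What each approach buys: the paper's argument is shorter because it recycles the known $\kappa>0$ formula and avoids any moment computation, but it is not self-contained and silently relies on Merton's theorem holding for all $\kappa'>0$; your argument is longer but proves the whole lemma from scratch, and the log-normal moment computation you flag as the technical heart is indeed the one ingredient that simultaneously yields the verification transversality bound and the blow-up for $\kappa\leq 0$.
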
 
\begin{proof}
The claim follows Merton (1971) when $k=l=0$, $\kappa>0$. 
The optimal value of the problem \eqref{V} is clearly decreasing in $\beta$. By the above result the optimal value when $k=l=0$, $\kappa\leq 0$ is at least $\lim_{\kappa\downarrow 0}\frac{1}{p}\kappa^{p-1}x^p=\infty$, so the optimal value of the problem is infinity. 
\end{proof}

It is clear that $V(x)\leq V_0(x)$ in all cases because the admissible strategy set for the latter is bigger than that of the former. 
Based on this, we can show 
\begin{lemma}[Viscosity solution in homogeneous case]
\label{lemmaHJB-2}
If $l=0$, $k>0$, $\kappa>0$, then the function $V$ defined by \eqref{V} is a viscosity solution to the following HJB equation
\begin{align}\label{HJB-2}
\begin{cases}
\beta V-\sup\limits_{\pi}\bigg(\frac{1}{2}\sigma^2\pi^2V_{xx}+\mu\pi
V_x\bigg)-\sup\limits_{c\geqslant kx}\bigg(\frac{c^p}{p}-cV_x\bigg)-rxV_x=0, \quad x>0,\\
V(0)=0,
\end{cases}
\end{align}
and satisfies $0\leq V(x)\leq C(x^p+1)$ for some constant $C>0$. 
\end{lemma}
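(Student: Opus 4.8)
The plan is to verify the two defining halves of the viscosity-solution property (subsolution and supersolution) using the dynamic programming principle (DPP) for the problem \eqref{V}, together with the growth bound that follows from the already-established inequality $V(x)\le V_0(x)=\frac{1}{p}\kappa^{p-1}x^p$. First I would record the easy estimates: since $(c_t,\pi_t)=(kX_t,0)$ is admissible when $l=0$ (by \eqref{constraint4}) and yields a nonnegative payoff, we get $V(x)\ge 0$; combined with $V(x)\le V_0(x)$ this gives $0\le V(x)\le C(x^p+1)$ with $C=\frac{1}{p}\kappa^{p-1}$. The boundary condition $V(0)=0$ follows because starting from $x=0$ the supermartingale estimate \eqref{xgrowth} forces $X_t\equiv 0$, hence $c_t\equiv 0$ (as $c_t\ge kX_t=0$ and the wealth cannot become negative), so the only feasible payoff is $0$.

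The core is the viscosity characterization. I would first state the DPP: for every stopping time $\tau$,
\begin{align}\label{DPP-plan}
V(x)=\sup_{\pi,c}\E\bigg[\int_0^{\tau} e^{-\beta s}U(c_s)\,ds+e^{-\beta\tau}V(X_{\tau})\;\Big|\;X_0=x\bigg].
\end{align}
For the subsolution property, let $\varphi$ be a smooth test function touching $V$ from above at a point $x_0>0$, so $V-\varphi$ has a local maximum there. I would fix an arbitrary constant control $(c,\pi)$ with $c\ge kx_0$, run the state process for a short time, apply \eqref{DPP-plan}, then use It\^o's formula on $e^{-\beta s}\varphi(X_s)$, divide by the time increment, and let it shrink to zero. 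Standard arguments then yield
\begin{align*}
\beta\varphi(x_0)-\tfrac12\sigma^2\pi^2\varphi_{xx}(x_0)-\mu\pi\varphi_x(x_0)-\Big(\tfrac{c^p}{p}-c\varphi_x(x_0)\Big)-rx_0\varphi_x(x_0)\le 0;
\end{align*}
taking the supremum over admissible $\pi$ and over $c\ge kx_0$ gives the subsolution inequality in \eqref{HJB-2}. For the supersolution property, I would take $\varphi$ touching $V$ from below at $x_0$ and argue by contradiction: if the HJB operator were strictly negative at $x_0$, there would exist a fixed near-optimal control keeping the operator negative in a neighborhood, and feeding this control into \eqref{DPP-plan} over a small time window would produce a payoff strictly exceeding $V(x_0)$, contradicting optimality.

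The two genuine technical points, and where I expect the main obstacle, are establishing the DPP \eqref{DPP-plan} rigorously and handling the state-dependent control constraint $c\ge kx$ in the viscosity arguments. The DPP requires measurable-selection / approximation arguments together with the integrability condition \eqref{constraint1} and the growth bound to justify interchanging limits and to control the tail $e^{-\beta\tau}V(X_\tau)$; this is the standard but delicate part. The constraint difficulty is more subtle: because the admissible consumption set $\{c\ge kx\}$ moves with the state, one must check that the localized controls used in both halves remain feasible near $x_0$ (for $x_0>0$ this is fine since $c\ge kx_0$ stays slightly above $k(x_0+\delta)$ after rescaling, or one simply perturbs $c$ by a vanishing amount), and that the supremum $\sup_{c\ge kx}(\frac{c^p}{p}-cV_x)$ in \eqref{HJB-2} is exactly the operator recovered from the DPP. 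Once these are in place, the passage to the viscosity inequalities is routine, so I would spend most of the effort on a careful statement and proof of \eqref{DPP-plan} and on verifying feasibility of the test-controls under the state-dependent constraint.
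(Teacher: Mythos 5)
Your plan misses the one idea the paper's proof actually consists of, and the step you do spell out has the two halves of the viscosity property interchanged. On the latter point: with $\varphi$ touching $V$ from \emph{above} at $x_0$ (local maximum of $V-\varphi$), the ``fix an arbitrary constant control, apply It\^o, divide by the time increment'' argument does not close, because the easy direction of the DPP gives $\varphi(x_0)=V(x_0)\geq \E\big[\int_0^\tau e^{-\beta s}U(c)\,ds+e^{-\beta\tau}V(X_\tau)\big]$ and the replacement of $V(X_\tau)$ by $\varphi(X_\tau)$ then goes the wrong way, since $V\leq\varphi$ locally. That argument is the proof of the \emph{supersolution} inequality for test functions touching from \emph{below}; the subsolution half is the one requiring the hard direction of the DPP and a contradiction via near-optimal controls. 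Note also that if your per-control inequality were derivable as written, taking the supremum would actually yield $\beta\varphi-\inf_{c,\pi}(\cdots)\leq 0$, which is false in general. So the subsolution step, as proposed, would fail.

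More importantly, the genuine obstacle you correctly identify --- that the admissible set $\{c\geq kX_t\}$ moves with the state, so neither the standard DPP nor the standard viscosity arguments apply off the shelf --- is precisely what your proposal does not resolve: perturbing $c$ by a vanishing amount restores feasibility of a constant control over a short window, but it does not produce a DPP for a state-dependent admissible class, and that is where all the real work would sit. The paper removes the obstacle in one line: substitute $u_t=c_t-kX_t$. Since $l=0$, admissibility becomes $u_t\geq 0$, $X_t\geq 0$, a state-independent constraint on the new control, and
\[
V(x)=\sup_{\pi,u}\E\Big[\int_0^\infty e^{-\beta t}U(u_t+kX_t)\,dt\;\Big|\;X_0=x\Big],
\]
to which the standard theory (Crandall--Lions, Yong--Zhou) applies verbatim; the resulting HJB equation, with $\sup_{u\geq 0}\big(\frac{(u+kx)^p}{p}-(u+kx)V_x\big)$, is identical to \eqref{HJB-2} after the change $c=u+kx$. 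The growth bound is immediate from $V\leq V_0$, as you note. I would recommend adopting the change of control variable rather than attempting to rebuild the DPP for a state-dependent constraint set.
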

The standard approach for viscosity solution is inapplicable here, because the approach requires that the control constraint set does not relay on the controlled state process (see, e.g. Yong and Zhou (1999)). 
The state-dependent constraint on consumption rate prevents us from freely selecting the control variables when checking the viscosity solution. This is the critical difference between our problem and standard theory. We overcome this obstacle by transforming the problem into an equivalent stochastic control problem that does not impose state-dependent constraints on control variables, allowing the conventional theory to apply. The proof of Lemma \ref{lemmaHJB-2} is given as follows. \medskip

\begin{proof}
The power growth estimate is due to $V(x)\leq V_0(x)$. 
We introduce a new control variable $u_t=c_t-kX_t$. When $l=0$, one can see the constraints \eqref{constraint2} and \eqref{constraint1} are equivalent to, for all $t\geqslant 0$, 
\begin{align*}
\E\bigg[\int_0^t(u_s+\pi_s^2)ds\bigg]<\infty, \quad
X_t\geqslant 0, \quad u_t\geqslant 0. 
\end{align*}
As a consequence, 
\begin{align*} 
V(x)=\sup\limits_{\pi, u}\E\bigg[\int_0^\infty e^{-\beta t}U(u_t+kX_t)dt\;\Big|\;X_0=x\bigg]. 
\end{align*}
Because the restriction on the new control variable $u_t$ is independent of the state process $X_t$, the standard approach can be applied to prove that $V$ is a viscosity solution to the following equation
\begin{align}\label{HJB-3}
\begin{cases}
\beta V-\sup\limits_{\pi}\bigg(\frac{1}{2}\sigma^2\pi^2V_{xx}+\mu\pi
V_x\bigg)-\sup\limits_{u\geq 0}\bigg(\frac{(u+ kx)^p}{p}-(u+ kx)V_x\bigg)-rxV_x=0, \quad x>0,\\
V(0)=0.
\end{cases}
\end{align}
We omit the proof here as it is standard. Interested readers may refer to Crandall and Lions (1983), Lions (1983), Fleming and Soner (1992) and Yong and Zhou (1999) for details. 
The above ODE is clearly equivalent to the desired HJB equation \eqref{HJB-2}, so the proof is complete. 
\end{proof} 

In a standard viscosity approach, one also needs to show that the HJB equation \eqref{HJB-2} admits at most one viscosity solution, which thus must be the value function $V$ by Lemma \ref{lemmaHJB-2}. This approach, however, is generally quiet complex and difficult, and usually cannot show the existence of a classical $C^2$ solution to the HJB equation. In this paper we adopt a different verification approach. It asserts that a classical $C^2$ solution to the HJB equation must be the value function. 

\begin{lemma}[Verification theorem in homogeneous case]\label{varthm-2}
Suppose $l=0$, $k>0$, $\kappa>0$. Then $\varphi=V$ if $\varphi\in C^2((0,+\infty))$ is a classical solution to the HJB equation \eqref{HJB-2} and satisfies $0\leq \varphi(x)\leq C(x^p+1)$ for some constant $C>0$. 
\end{lemma}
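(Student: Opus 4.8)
The plan is to prove the standard verification inequality: any classical $C^2$ solution $\varphi$ with the stated growth bound dominates the payoff of every admissible strategy, and is attained by the Merton-type feedback control, hence equals the value function $V$. First I would fix an arbitrary admissible pair $(c_t,\pi_t)$ with associated wealth $X_t$ and apply It\^o's lemma to the discounted process $e^{-\beta t}\varphi(X_t)$. Using the dynamics \eqref{X}, the drift term that appears is
\begin{align*}
e^{-\beta t}\Big(-\beta\varphi(X_t)+\tfrac{1}{2}\sigma^2\pi_t^2\varphi_{xx}(X_t)+\mu\pi_t\varphi_x(X_t)+(rX_t-c_t)\varphi_x(X_t)\Big).
\end{align*}
Since $\varphi$ solves \eqref{HJB-2}, the two suprema in the HJB equation bound this drift from above: the $\pi$-supremum dominates the diffusion-coefficient terms, and the $c$-supremum over $c\geqslant kX_t$ dominates $\frac{c_t^p}{p}-c_t\varphi_x(X_t)$ because the admissible $c_t$ satisfies $c_t\geqslant kX_t$. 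This yields the pointwise differential inequality $d\big(e^{-\beta t}\varphi(X_t)\big)\leqslant -e^{-\beta t}\frac{c_t^p}{p}\,dt+(\text{martingale increment})$.

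Next I would integrate this inequality from $0$ to $T$, take expectations to kill the martingale part (after a standard localization argument, using the integrability condition \eqref{constraint1} and the growth bound $0\leqslant\varphi(x)\leqslant C(x^p+1)$ to control the stochastic integral and pass to the limit), and rearrange to obtain
\begin{align*}
\E\bigg[\int_0^T e^{-\beta t}\frac{c_t^p}{p}\,dt\bigg]\leqslant \varphi(x)-\E\big[e^{-\beta T}\varphi(X_T)\big].
\end{align*}
Letting $T\to\infty$, the nonnegativity of $\varphi$ gives $\E\big[e^{-\beta T}\varphi(X_T)\big]\geqslant 0$, so the objective functional is bounded above by $\varphi(x)$ for every admissible strategy; taking the supremum over all admissible controls yields $V(x)\leqslant\varphi(x)$.

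For the reverse inequality I would exhibit a candidate optimal feedback strategy by choosing, at each state $x$, the maximizers $\pi^*(x)$ and $c^*(x)$ of the two suprema in \eqref{HJB-2}; substituting these into the SDE \eqref{X} defines a closed-loop wealth process, and along this process every inequality above becomes an equality by construction, giving $V(x)\geqslant\varphi(x)$. The main obstacle I expect is the technical justification at infinity: I must verify that the candidate feedback SDE admits a well-defined admissible solution (in particular that the resulting $c^*_t\geqslant kX_t$ and the transversality term $\E\big[e^{-\beta T}\varphi(X_T)\big]\to 0$ along the optimizer), and that the localization removing the martingale term is legitimate despite the state space being unbounded and $\varphi_x$ potentially blowing up near $x=0$. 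The power growth bound on $\varphi$, the condition $\kappa>0$ guaranteeing finiteness, and comparison with the Merton bound $V\leqslant V_0$ from \lemref{Mertons} are precisely the ingredients that make these limiting arguments go through.
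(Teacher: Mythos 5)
Your overall architecture is the same as the paper's: It\^o's lemma applied to $e^{-\beta t}\varphi(X_t)$, the HJB inequality to bound the drift, localization to kill the stochastic integral, nonnegativity of $\varphi$ to get $V\leq\varphi$, and then equality along the feedback maximizers modulo a transversality condition. The first half is fine (and you are right that for this direction the state-dependent constraint causes no trouble, since one only needs the pointwise inequality $\sup_{c\geq kX_t}(\frac{c^p}{p}-c\varphi_x)\geq \frac{c_t^p}{p}-c_t\varphi_x$ for the admissible $c_t$).

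The genuine gap is that you never actually prove the transversality condition $\liminf_n\E\big[e^{-\beta\theta_n}\varphi(X_{\theta_n})\big]=0$ along the optimizer; you only assert that the growth bound, $\kappa>0$, and the Merton comparison ``make these limiting arguments go through.'' This is precisely where the paper spends most of its effort, and it is not routine: because the localizing times $\theta_n$ are random, one cannot simply compute $e^{-\beta T}\E[X_T^p]$ at deterministic times and let $T\to\infty$ (and even that deterministic-time computation, which does work for the explicit geometric-Brownian-motion optimizer here, is absent from your write-up). The paper's argument bounds $\E\big[e^{-\beta\theta_n}\varphi(X_{\theta_n})\big]\leq C\E\big[e^{-\beta\theta_n}\gamma_{\theta_n}^{-p}(\gamma_{\theta_n}X_{\theta_n})^p\big]+C\E\big[e^{-\beta\theta_n}\big]$ using the growth bound, applies H\"older's inequality together with the supermartingale property of $\gamma_tX_t$ (with $\gamma_t=e^{(k-r-\frac{1}{2}(\sigma^{-1}\mu)^2)t-\sigma^{-1}\mu W_t}$ from the feasibility analysis) to reduce to an exponential moment of $(\theta_n,W_{\theta_n})$, and then needs a second H\"older step with a parameter $\eta$ chosen close to $0$ so that $\kappa+\frac{pk}{1-p}+\frac{1}{2}a^2-\frac{a^2}{2(1-\eta)}>0$, where $a=\frac{p\sigma^{-1}\mu}{1-p}$ --- this is exactly where $\kappa>0$ enters. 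Without some version of this estimate (or an explicit computation along the closed-loop geometric Brownian motion with a justification that the localized limits can be interchanged), the second inequality $V\geq\varphi$ is not established, so the proof is incomplete at its central technical point.
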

\begin{proof}
Again, because of the state-dependent constraint on the control variables, the standard approach to proving the verification theorem fails.
However, we can still utilize the approach used to prove Lemma \ref{lemmaHJB-2} to prove this statement.
We use the same notation as in the proof of Lemma \ref{lemmaHJB-2}. 
By \eqref{HJB-3}, we have, for any $x>0$, $u\geq 0$, $\pi\in\R$, 
\[ \beta \varphi\geq \frac{1}{2}\sigma^2\pi^2 \varphi_{xx}+\mu\pi\varphi_x+U(u+ kx) -(u+ kx)\varphi_x+rx\varphi_x.\]
Applying It\^{o}'s lemma to $e^{-\beta t}\varphi(X_t)$, 
\begin{align} 
e^{-\beta t}\varphi(X_t) &=\varphi(x)+\int_0^t e^{-\beta s}\Big(-\beta \varphi(X_s)
+ \varphi_x(X_s)((rX_s-(u_s+kX_s)+\mu\pi_s)\nonumber\\
&\qquad\qquad\qquad\qquad\;\;+\frac{1}{2}\varphi_{xx}(X_s) \pi_s^2\sigma^2
\Big)ds+\int_0^t e^{-\beta s}\varphi_x(X_s)\pi_s\sigma dW_s\nonumber\\
&\leq \varphi(x)-\int_0^t e^{-\beta s}U(u_s+kX_s)ds+\int_0^t e^{-\beta s}\varphi_x(X_s)\pi_s\sigma dW_s.\label{veriphi}
\end{align}
Let $\theta_n=n\wedge \inf\{t\geq 0: |\varphi_x(X_t)\pi_t|>n\}$. Then
\begin{align} \label{veriphi2}
\E\bigg(\int_0^{ \theta_n} e^{-\beta s}U(u_s+kX_s)ds\bigg)
&\leq \varphi(x)-\E\big(e^{-\beta \theta_n}\varphi(X_{ \theta_n})\big) .
\end{align}
Applying the monotone convergence theorem, we obtain 
\begin{align} \label{veriphi2-1}
\E\bigg(\int_0^{\infty} e^{-\beta s}U(u_s+kX_s)ds\bigg)&\leq \varphi(x)-\liminf_n\E\big(e^{-\beta \theta_n}\varphi(X_{ \theta_n})\big).
\end{align}
Since $\varphi\geq 0$, it yields 
\begin{align*} 
\E\bigg(\int_0^{\infty} e^{-\beta s}U(u_s+kX_s)ds\bigg)&\leq \varphi(x), 
\end{align*}
which implies $\varphi(x)\geq V(x)$.

On the other hand, if we chose a feedback pair $(u, \pi)$ to maximize the optimization problem in \eqref{HJB-3}, then the inequalities in \eqref{veriphi}-\eqref{veriphi2-1} become equations.
If we can show 
\begin{align} \label{veriphi3}
\liminf_n\E\big(e^{-\beta \theta_n}\varphi(X_{ \theta_n})\big)=0,
\end{align}
then 
\begin{align*} 
\E\bigg(\int_0^{\infty} e^{-\beta s}U(u_s+kX_s)ds\bigg)&= \varphi(x), 
\end{align*}
implying $V(x)\geq \varphi(x)$. This will complete the proof of the lemma. 

It is only left to prove \eqref{veriphi3}. Using H\"older's inequality 
$\E(XY)\leq [\E(X^{\frac{1}{p}})]^p [\E(Y^{\frac{1}{1-p}})]^{1-p}$ for $X,Y\geq 0$ and the fact that $\gamma_t X_t$ is a suppermartingale, we get 
\begin{align*} 
\E\big(e^{-\beta \theta_n}\varphi(X_{ \theta_n})\big) 
&\leq \E\big(e^{-\beta \theta_n} C((X_{ \theta_n})^p+1)\big) \\
&=C \E\big(e^{-\beta \theta_n} \gamma_{\theta_n}^{-p} (\gamma_{\theta_n} X_{ \theta_n})^p\big) +C\E\big(e^{-\beta \theta_n}\big)\\
&\leq C\big[\E\big(\gamma_{\theta_n} X_{ \theta_n}\big)\big]^p
\big[\E\big((e^{-\beta \theta_n} \gamma_{\theta_n}^{-p})^{\frac{1}{1-p}}\big)\big]^{1-p}+C\E\big(e^{-\beta \theta_n}\big)\\
&\leq Cx^p
\big[\E\big((e^{-\beta \theta_n} \gamma_{\theta_n}^{-p})^{\frac{1}{1-p}}\big)\big]^{1-p}+C\E\big(e^{-\beta \theta_n}\big).
\end{align*}
By H\"older's inequality again, 
\begin{align*} 
\E\big((e^{-\beta \theta_n} \gamma_{\theta_n}^{-p})^{\frac{1}{1-p}}\big) &=
\E\big(e^{-\frac{\beta+p(k-r-\frac{1}{2}(\sigma^{-1}\mu)^2)}{1-p}\theta_n+\frac{p\sigma^{-1}\mu}{1-p} W_{\theta_n}}\big)\\
&=\E\big(e^{-(\kappa+\frac{pk}{1-p}+\frac{1}{2}a^2)\theta_n+aW_{\theta_n}}\big)\\
&\leq\big[\E\big(e^{-\frac{1}{\eta}(\kappa+\frac{pk}{1-p}+\frac{1}{2}a^2-\frac{a^2}{2(1-\eta)})\theta_n}\big)\big]^{\eta}\big[\E(e^{-\frac{a^2}{2(1-\eta)^2}\theta_n+\frac{a}{1-\eta}W_{\theta_n}})\big]^{1-\eta}, 
\end{align*} 
where $a=\frac{p\sigma^{-1}\mu}{1-p}$ and $0<\eta<1$.
Since $k\geq 0$, $\kappa>0$, we can let $\eta$ be sufficiently close to 0 so that 
\[\kappa+\frac{pk}{1-p}+\frac{1}{2}a^2-\frac{a^2}{2(1-\eta)}>0.\]
By the monotone convergence theorem, it follows 
\[\lim_n\E\big(e^{-\frac{1}{\eta}(\kappa+\frac{pk}{1-p}+\frac{1}{2}a^2-\frac{a^2}{2(1-\eta)})\theta_n}\big)=0.\]
It is easily seen from a localizing argument that 
\[\E(e^{-\frac{a^2}{2(1-\eta)^2}\theta_n+\frac{a}{1-\eta} W_{\theta_n}})\leq 1.\]
Combining the above estimates, we establish \eqref{veriphi3}. 
\end{proof}

The following result gives the optimal strategy and optimal value in the homogeneous case. 
\begin{theorem}[Optimal strategy in homogeneous case]\label{homogeneous}
If $l=0$, $\kappa>0$, then the optimal consumption-investment strategy to the problem \eqref{V} is
\[ (c_t, \pi_t)=\bigg(\max\{\kappa, k\} X_t, \; \frac{\mu}{\sigma^2(1-p)}X_t\bigg), \quad t\geqslant 0, \]
and the optimal value $V(x)$ is equal to 
\begin{align}\label{Vk}
V_k(x)=\frac{\max\{\kappa, k\}^p}{p(\kappa(1-p)+\max\{\kappa, k\}p)}x^p=
\left\{
\begin{array}{ll}
\frac{k^p}{p(\kappa(1-p)+kp)}x^p, &\text{if } k>\kappa; \\[10pt]
\frac{1}{p}\kappa^{p-1}x^p, & \text{if } 0\leq k\leqslant \kappa.
\end{array}
\right.
\end{align}
\end{theorem}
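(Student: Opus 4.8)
The plan is to guess a closed-form candidate $\varphi(x)=Ax^{p}$ for the value function, pin down the constant $A$ by forcing $\varphi$ to satisfy the HJB equation \eqref{HJB-2}, and then invoke the verification theorem \lemref{varthm-2} to conclude $\varphi=V$. The case $k=0$ is already Merton's theorem (\lemref{Mertons}), so I would treat $k>0$ and only need to produce a classical $C^{2}$ solution of \eqref{HJB-2} obeying the growth bound $0\le\varphi(x)\le C(x^{p}+1)$.

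First I would substitute $\varphi=Ax^{p}$, so that $\varphi_{x}=Apx^{p-1}$ and $\varphi_{xx}=Ap(p-1)x^{p-2}<0$, into the two suprema in \eqref{HJB-2}. Since $\varphi_{xx}<0$, the investment supremum is a strictly concave quadratic in $\pi$, maximized at $\pi^{*}=-\mu\varphi_{x}/(\sigma^{2}\varphi_{xx})=\frac{\mu}{\sigma^{2}(1-p)}x$ with value $-\mu^{2}\varphi_{x}^{2}/(2\sigma^{2}\varphi_{xx})$. For the consumption supremum, the concave map $c\mapsto c^{p}/p-c\varphi_{x}$ has unrestricted maximizer $\hat c=(\varphi_{x})^{1/(p-1)}=(Ap)^{1/(p-1)}x$, so the maximizer over $\{c\ge kx\}$ is $\hat c$ when $\hat c\ge kx$ and $kx$ otherwise. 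I would then determine $A$ in each regime. If the constraint is slack, solving the resulting algebraic identity gives $(Ap)^{1/(p-1)}=\kappa$, i.e. $A=\kappa^{p-1}/p$, which is consistent ($\hat c=\kappa x\ge kx$) precisely when $k\le\kappa$. If the constraint binds, setting $c=kx$ and solving gives $A=k^{p}/\big(p(\kappa(1-p)+kp)\big)$; a short computation shows $\hat c\le kx$ is equivalent to $k\ge\kappa$, so this regime is consistent exactly when $k>\kappa$. This reproduces the two branches of \eqref{Vk}, and in both regimes the pointwise consumption maximizer equals $\max\{\kappa,k\}\,x$ while the investment maximizer equals $\frac{\mu}{\sigma^{2}(1-p)}x$. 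Since $V_{k}$ is smooth on $(0,\infty)$, satisfies $V_{k}(0)=0$ and the growth bound, it is a classical solution of \eqref{HJB-2}, and \lemref{varthm-2} yields $V=V_{k}$.

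Finally I would confirm optimality of the feedback pair. Substituting $c_{t}=\max\{\kappa,k\}X_{t}$ and $\pi_{t}=\frac{\mu}{\sigma^{2}(1-p)}X_{t}$ into \eqref{X} produces a linear SDE (a geometric Brownian motion), so $X_{t}$ stays strictly positive and has finite moments of every order; this verifies $X_{t}\ge 0$, the integrability \eqref{constraint1}, and $c_{t}=\max\{\kappa,k\}X_{t}\ge kX_{t}$, so the pair is admissible with $u_{t}:=c_{t}-kX_{t}=(\max\{\kappa,k\}-k)X_{t}\ge 0$. Because this pair attains both suprema pointwise, the inequalities \eqref{veriphi}--\eqref{veriphi2-1} in the proof of \lemref{varthm-2} become equalities, while the transversality estimate \eqref{veriphi3} established there holds for every admissible strategy; hence $\E\int_{0}^{\infty}e^{-\beta s}U(c_{s})\,ds=V_{k}(x)$, so the feedback pair is optimal.

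The main obstacle I anticipate is not any single deep computation but the self-consistency check forced by the state-dependent constraint: the constant $A$ derived under the hypothesis ``constraint slack'' (resp. ``constraint binding'') must in turn produce a consumption maximizer $\hat c$ compatible with that very hypothesis. Verifying that $\hat c\ge kx\Leftrightarrow k\le\kappa$ (and the reverse) is exactly what yields the clean dichotomy $k\le\kappa$ versus $k>\kappa$ and rules out an inconsistent guess; this is the step requiring genuine care. Everything else — the moment bounds for the geometric Brownian motion, the growth estimate, and the appeal to the verification theorem — is routine.
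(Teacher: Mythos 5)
Your proposal is correct and follows essentially the same route as the paper: the paper's proof simply cites Merton's Theorem (\lemref{Mertons}) for $k=0$ and the verification theorem (\lemref{varthm-2}) for $k>0$, leaving implicit exactly the computations you carry out (substituting $\varphi=Ax^{p}$ into \eqref{HJB-2}, resolving the slack/binding dichotomy $k\leqslant\kappa$ versus $k>\kappa$, and checking admissibility of the feedback pair). Your write-up supplies those omitted details faithfully, so it is a correct, fleshed-out version of the paper's own argument.
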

\begin{proof} 
When $k=0$, this result is Merton's Theorem \ref{Mertons}. 
When $k>0$, the assertion follows from Lemma \ref{varthm-2}.
\end{proof}
\par
As the admissible set of $c_t$ is decreasing in $k$, we see $V_k(x)$ is decreasing in $k$, and consequently $V_k(x)\leqslant V_0(x)=\frac{1}{p}\kappa^{p-1}x^p$. However, when $k\leqslant \kappa$, the constraint $c_{t}\geqslant kX_{t}$ is satisfied by the optimal consumption rate $c_{t}=\kappa X_{t}$ in Merton's Theorem, so the lower bound constraint becomes redundant. By contrast, when $k>\kappa$, the Merton strategy is no more feasible, but the optimal consumption strategy is till proportional to the wealth.

Now we have completely solved the problem \eqref{V} in the homogeneous case. The rest of this paper will focus on the non-homogeneous case.

%%%%%%%%%%%%%%%%%%%%%%%%%%%%%%%%%%%%%%%%%%
\section{The non-homogeneous case $0\leq k<r$, $l>0$: HJB equation and verification theorem }
\setcounter{equation}{0}

We start with the feasibility issue of the problem \eqref{V} in the non-homogeneous case $0\leq k<r$, $l>0$. 

In this case, the feasibility condition \eqref{constraint4} becomes 
$x\geq \frac{l}{r-k}$, so the problem \eqref{V} is feasible if and only if $x\geq \frac{l}{r-k}$. 
In the marginal case $x=\frac{l}{r-k}$, $(c_t, \pi_t)=(kX_t+l, 0)$ is the only admissible strategy and the corresponding state process is $X_t\equiv \frac{l}{r-k}$.
Hence, we deduce a left boundary condition for the value function $V$: 
\begin{align}\label{V00}
V\left(\frac{l}{r-k}\right)=\int_0^\infty e^{-\beta t}\frac{(k\frac{l}{r-k}+l)^p}{p}dt=\frac{1}{\beta p}\left(\frac{rl}{r-k}\right)^p.
\end{align}
In the rest of this paper, we denote
\[ x_e=\frac{l}{r-k}> 0, \quad c_e=kx_e+l=\frac{rl}{r-k}> 0. \]
The left boundary condition \eqref{V00} can be expressed as
\begin{align}\label{V0}
V(x_e)=\frac{1}{\beta p}c_e^p.
\end{align}
As discussed above, the problem \eqref{V} has no admissible solution when $x<x_e$. 

Similar to the homogeneous case, we have the following two results. Their proofs are similar to the homogeneous case, so we omit the details. 
\begin{lemma}[Viscosity solution in non-homogeneous case]
\label{lemmaHJB}
If $0\leq k<r$, $l>0$, $\kappa>0$, then the function $V$ defined by \eqref{V} is a viscosity solution to the following HJB equation
\begin{align}\label{HJB}
\begin{cases}
\beta V-\sup\limits_{\pi}\bigg(\frac{1}{2}\sigma^2\pi^2V_{xx}+\mu\pi
V_x\bigg)-\sup\limits_{c\geqslant kx+l}\bigg(\frac{c^p}{p}-cV_x\bigg)-rxV_x=0, \quad x>x_e,\\
V(x_e)=\frac{1}{\beta p}c_e^p,
\end{cases}
\end{align}
and satisfies $0\leq V(x)\leq C(x^p+1)$ for some constant $C>0$. 
\end{lemma}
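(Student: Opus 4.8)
The plan is to mirror the homogeneous case treated in \lemref{lemmaHJB-2} by introducing a shifted control variable that removes the state-dependence from the consumption constraint. Specifically, I would set $u_t=c_t-kX_t-l$, so that the constraint $c_t\geq kX_t+l$ becomes simply $u_t\geq 0$, which is state-independent. The key observation is that under this substitution the wealth dynamics \eqref{X} read $dX_t=((r-k)X_t-l-u_t+\mu\pi_t)\,dt+\pi_t\sigma\,dW_t$, and the running reward becomes $U(u_t+kX_t+l)$. Because the feasible domain is now $x\geq x_e=\frac{l}{r-k}$ rather than $x\geq 0$, I would first verify that the integrability and nonnegativity constraints \eqref{constraint1}--\eqref{constraint2} translate into the pair of conditions $\E\big[\int_0^t(u_s+\pi_s^2)\,ds\big]<\infty$ and $X_t\geq x_e$, $u_t\geq 0$, for all $t\geq 0$. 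The supermartingale computation already carried out in the feasibility discussion (with $\gamma_t$) shows $X_t\geq x_e$ is preserved, so the admissible set is correctly captured.

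With the constraint now state-independent, the standard dynamic programming machinery applies verbatim. I would invoke the dynamic programming principle together with the It\^o-based arguments of Crandall--Lions (1983), Lions (1983), Fleming--Soner (1992) and Yong--Zhou (1999) to conclude that $V$ is a viscosity solution to the transformed HJB equation
\begin{align*}
\beta V-\sup\limits_{\pi}\Big(\tfrac{1}{2}\sigma^2\pi^2V_{xx}+\mu\pi V_x\Big)-\sup\limits_{u\geq 0}\Big(\tfrac{(u+kx+l)^p}{p}-(u+kx+l)V_x\Big)-rxV_x=0,\quad x>x_e,
\end{align*}
subject to the boundary value $V(x_e)=\frac{1}{\beta p}c_e^p$ established in \eqref{V0}. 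Since the inner supremum is taken over $u\geq 0$, writing $c=u+kx+l$ shows this supremum equals $\sup_{c\geq kx+l}\big(\frac{c^p}{p}-cV_x\big)$, so the transformed equation is identical to the target HJB equation \eqref{HJB}. Thus the viscosity-solution property transfers directly.

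The power-growth estimate $0\leq V(x)\leq C(x^p+1)$ I would obtain exactly as in the homogeneous case, namely by dominating $V$ by the unconstrained Merton value. Concretely, dropping the lower-bound constraint on consumption only enlarges the admissible set, so $V(x)\leq V_0(x)=\frac{1}{p}\kappa^{p-1}x^p$ whenever $\kappa>0$; combined with $V\geq 0$ (which holds because $U\geq 0$ and the constant strategy $(kX_t+l,0)$ is admissible), this yields the claimed bound with a suitable $C$.

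The only genuine subtlety, and what I would flag as the main obstacle, is handling the \emph{left boundary} at $x_e$: unlike the homogeneous case where the natural boundary sits at the absorbing point $x=0$, here the boundary $x=x_e$ is itself an attainable, degenerate state at which the only admissible control is the constant one. I would need to check that the viscosity-solution argument correctly produces the Dirichlet condition \eqref{V0} rather than a Neumann-type or free boundary condition there, i.e.\ that the value function is continuous up to $x_e$ and takes the prescribed value. This follows from the fact that at $x=x_e$ the state is frozen at $X_t\equiv x_e$ under the unique admissible strategy, forcing the explicit value $\frac{1}{\beta p}c_e^p$; continuity from the right is then a consequence of the continuity of the optimization over the shrinking-but-nonempty feasible set as $x\downarrow x_e$. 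Since the paper explicitly states the proof parallels the homogeneous case and omits the details, I would present the transformation and the reduction to \lemref{lemmaHJB-2}, and only spell out the boundary argument, which is where the non-homogeneous structure actually enters.
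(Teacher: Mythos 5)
Your proposal is correct and follows essentially the same route as the paper: the paper explicitly omits the proof of \lemref{lemmaHJB}, stating it is similar to the homogeneous case, and the intended argument is exactly your substitution $u_t=c_t-kX_t-l$ rendering the constraint state-independent, followed by the standard viscosity machinery and the bound $V\leq V_0$. Your additional care with the Dirichlet condition at $x_e$ matches the boundary discussion the paper carries out just before stating the lemma (via the unique admissible strategy freezing $X_t\equiv x_e$), so nothing is missing.
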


\begin{lemma}[Verification theorem in non-homogeneous case]\label{varthm}
Suppose $0\leq k<r$, $l>0$, $\kappa>0$. Then $\varphi=V$ if $\varphi\in C^2((x_e,+\infty))$ is a classical solution to the HJB equation \eqref{HJB} and satisfies $0\leq \varphi(x)\leq C(x^p+1)$ for some constant $C>0$. 
\end{lemma}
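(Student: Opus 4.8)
The plan is to adapt the verification argument of Lemma~\ref{varthm-2} to the non-homogeneous setting, since the excerpt explicitly signals that the two proofs run in parallel. First I would introduce the same change of control variable $u_t = c_t - kX_t - l$, which converts the state-dependent constraint $c_t \geq kX_t + l$ into the state-independent constraint $u_t \geq 0$, and rewrite the HJB equation \eqref{HJB} in the equivalent form involving $\sup_{u\geq 0}(\frac{(u+kx+l)^p}{p} - (u+kx+l)V_x)$. The first half of the argument is then essentially unchanged: for an arbitrary admissible pair I would apply It\^o's lemma to $e^{-\beta t}\varphi(X_t)$, use the fact that $\varphi$ solves the HJB equation to obtain the supermartingale-type inequality, localize with stopping times $\theta_n = n \wedge \inf\{t\geq 0 : |\varphi_x(X_t)\pi_t| > n\}$, and invoke the monotone convergence theorem together with $\varphi \geq 0$ to conclude $\varphi(x) \geq V(x)$.

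For the reverse inequality I would select the feedback pair $(u,\pi)$ attaining the suprema in the HJB equation, turning all the inequalities into equalities, so that the claim reduces to establishing the transversality condition
\begin{align*}
\liminf_n \E\big(e^{-\beta\theta_n}\varphi(X_{\theta_n})\big) = 0.
\end{align*}
Using the power growth bound $0 \leq \varphi(x) \leq C(x^p + 1)$, the same H\"older splitting as in Lemma~\ref{varthm-2} reduces the first term to controlling $\E((e^{-\beta\theta_n}\gamma_{\theta_n}^{-p})^{\frac{1}{1-p}})$, which I would bound exactly as before using the condition $\kappa > 0$ (and $k \geq 0$) to choose $\eta$ small enough that the relevant exponent stays positive, forcing that expectation to vanish. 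The term $C\,\E(e^{-\beta\theta_n})$ vanishes by monotone convergence since $\beta > 0$ and $\theta_n \to \infty$.

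The key structural difference from the homogeneous case is the presence of the left boundary $x_e = \frac{l}{r-k}$ rather than $0$, so one must verify that the feedback-controlled state process $X_t$ stays in $(x_e, +\infty)$ and that $\varphi$ is only assumed $C^2$ on $(x_e, +\infty)$; the It\^o expansion must therefore be justified on this open half-line, with the boundary value $\varphi(x_e) = \frac{1}{\beta p}c_e^p$ matching the known value $V(x_e)$ from \eqref{V0}. I expect the main obstacle to be precisely this boundary bookkeeping: ensuring the supermartingale and transversality estimates remain valid uniformly as the state approaches $x_e$, and checking that the constant $C$ in the growth bound absorbs the additive constant $l$ appearing inside $(u+kx+l)^p$. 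Since the admissible strategy $(kX_t+l,0)$ keeps the wealth bounded below by $x_e$ and the dual weight $\gamma_t X_t$ remains a nonnegative supermartingale just as before, I anticipate these adjustments are routine modifications of the homogeneous argument rather than genuinely new difficulties, which is consistent with the authors' remark that the proof is omitted as similar.
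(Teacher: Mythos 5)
Your proposal is correct and follows exactly the route the paper intends: the authors omit this proof precisely because it is the argument of Lemma~\ref{varthm-2} with the shift $u_t=c_t-kX_t-l$, the same It\^o/localization step, and the same H\"older estimate relying on the supermartingale property of $\gamma_tX_t$ (which still holds since $l>0$ only strengthens the drift inequality) and on $\kappa>0$, $k\geq 0$. Your additional remarks on the boundary $x_e$ and the admissibility of the feedback control are sensible bookkeeping and do not change the argument.
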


\par
By Lemmas \ref{lemmaHJB} and \ref{varthm}, solving the problem \eqref{V} reduces to finding a $C^2$ solution to the HJB equation \eqref{HJB}. Different from the homogeneous case, we can not get an explicit solution in the non-homogeneous case, except the case with state-independent constraint (in which case the solution is given in Appendix A). 
We will first use a dual transformation to convert the {fully non-linear} HJB equation \eqref{HJB} into a {semi-linear} one and then show the latter admits a classical $C^3$ solution, resulting in a $C^3$ solution to the HJB equation \eqref{HJB}. 
\par

%%%%%%%%%%%%%%%%%%%%%%%%%%%%%%%%%%%%%%%%%%%%%%%%%%%%%%%%%%%%%%%%%%%%%%%%%%%%%%%%%%%%

\section{Properties of the value function in non-homogeneous case }
\setcounter{equation}{0} 
We focus on the case $k>0$, $l>0$, $\kappa>0$. We give an explicit solution in the special case $k=0$, $l>0$, $\kappa>0$, in Appendix A. 

\subsection{Bounds for the value function}
\setcounter{equation}{0}

\begin{theorem} We have
\begin{align}\label{V_up}
V(x)\leqslant V_k(x),
\end{align}
in $(x_{e}, \infty)$, where $V_k(x)$ is defined in Theorem \ref{homogeneous}.
\end{theorem}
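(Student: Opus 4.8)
The plan is to compare the two stochastic control problems directly through their admissible sets, without touching the HJB equation or the explicit form of $V_k$. Recall from Theorem \ref{homogeneous} that $V_k$ is the value function of the homogeneous problem carrying the lower-bound constraint $c_t \geq k X_t$, whereas $V$ is the value function of the present non-homogeneous problem, whose constraint $c_t \geq k X_t + l$ is strictly tighter because $l>0$. Both problems share exactly the same wealth dynamics \eqref{X}, the same integrability requirement \eqref{constraint1}, the same solvency constraint $X_t \geq 0$, and the same objective $\E\big[\int_0^\infty e^{-\beta t} U(c_t)\,dt\big]$; they differ \emph{only} in the consumption floor. The key observation is therefore that the admissible set can only shrink when the floor is raised.

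First I would fix any $x > x_e$ and take an arbitrary admissible pair $(c_t,\pi_t)$ for the non-homogeneous problem started at $X_0=x$. Since $l>0$, the constraint $c_t \geq k X_t + l$ forces $c_t \geq k X_t$, while the remaining defining conditions (the integrability in \eqref{constraint1} and $X_t \geq 0$) are unchanged. Hence $(c_t,\pi_t)$ is also admissible for the homogeneous problem with the same constant $k$ and the same initial wealth $x$. This gives the inclusion of admissible sets, and taking the supremum of the common objective functional over the two nested sets yields $V(x)\leq V_k(x)$ at once, for every $x>x_e$. Because $x_e>0$, each such $x$ lies in the domain $(0,\infty)$ on which $V_k$ is defined in Theorem \ref{homogeneous}, so the inequality is meaningful throughout $(x_e,\infty)$.

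There is essentially no obstacle beyond correctly identifying that the passage from the homogeneous to the non-homogeneous problem leaves the state SDE and the integrability/solvency constraints verbatim, and only raises the consumption floor; monotonicity of the value in the constraint set then does all the work. The single point I would state with care is precisely this invariance of the wealth dynamics and of \eqref{constraint1}, which is what lets admissibility transfer directly from one problem to the other and thereby licenses the comparison of suprema.
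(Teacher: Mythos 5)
Your argument is correct and is exactly the paper's own proof: the admissible set shrinks as the consumption floor $l$ is raised while the dynamics, solvency and integrability conditions and the objective are unchanged, so the value function for $l=0$ (namely $V_k$) dominates $V$. You have simply spelled out the set-inclusion and supremum comparison that the paper states in one line.
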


\begin{proof}
As the admissible set of $c_t$ is decreasing in $l$ and thus the value function corresponding to $l=0$ gives an upper bound for $V(x)$.
\end{proof}

To gain a better understanding of the value function, we will explore a new problem and utilize it to deduce the properties for $V(x)$.
Let
\[ Y_t=X_t-x_e, \quad \xi_t=c_t-c_e, \]
then
\begin{align}\label{dY}
dY_t 
&=[rY_t-\xi_t+\mu\pi_t]dt+\sigma\pi_t dW_t,\\\label{Y0}
Y_0&=y=x-x_e,
\end{align}
and by \eqref{constraint2}, we have the constraint 
\begin{align}\label{Y_xi}
Y_t\geqslant 0, \quad \xi_t\geqslant kY_t.
\end{align}
Define
\begin{align}\label{W}
W(y):=\sup\limits_{\pi, \xi}\E\bigg[\int_0^\infty e^{-\beta t}U(\xi_t+c_e)dt\bigg],\quad y>0.
\end{align}
Then
\begin{align}\label{VW}
V(x)=W(x-x_e),\quad x>x_{e}.
\end{align}

Now we can prove the following bounds for $V$. 
\begin{theorem}\label{Thm:V_down}
We have
\begin{align}\label{V_downup}
V_k(x-x_e)\leqslant V(x)\leqslant V_k(x-x_e)+V(x_e), \quad x> x_{e}. 
\end{align}
\end{theorem}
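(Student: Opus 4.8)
The plan is to exploit the representation $V(x)=W(x-x_e)$ from \eqref{VW} and to compare the auxiliary value function $W$ with the homogeneous value function $V_k$ of Theorem \ref{homogeneous}. The crucial observation is that the controlled dynamics \eqref{dY} for $Y_t$, together with the constraint \eqref{Y_xi}, coincide exactly with the state equation and constraint of the homogeneous problem (the case $l=0$) started at initial wealth $y$; the only difference between $W(y)$ and $V_k(y)$ lies in the running reward, namely $U(\xi_t+c_e)$ in \eqref{W} versus $U(\xi_t)$ in the definition of $V_k$. Consequently a pair $(\pi,\xi)$ is admissible for $W$ if and only if it is admissible for the homogeneous problem, so the two value functions are suprema of different integrands over the same family of strategies.

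For the lower bound I would use that $U(c)=c^p/p$ is increasing. Since $c_e>0$, we have $U(\xi_t+c_e)\geq U(\xi_t)$ pointwise, so for every admissible $(\pi,\xi)$,
\[
\E\bigg[\int_0^\infty e^{-\beta t}U(\xi_t+c_e)\,dt\bigg]\geq \E\bigg[\int_0^\infty e^{-\beta t}U(\xi_t)\,dt\bigg].
\]
Taking the supremum over admissible strategies gives $W(y)\geq V_k(y)$, that is, $V(x)\geq V_k(x-x_e)$.

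For the upper bound I would invoke the subadditivity of $t\mapsto t^p$ for $0<p<1$: for nonnegative $a,b$ one has $(a+b)^p\leq a^p+b^p$, whence $U(\xi_t+c_e)\leq U(\xi_t)+U(c_e)$. Therefore
\[
\E\bigg[\int_0^\infty e^{-\beta t}U(\xi_t+c_e)\,dt\bigg]\leq \E\bigg[\int_0^\infty e^{-\beta t}U(\xi_t)\,dt\bigg]+U(c_e)\int_0^\infty e^{-\beta t}\,dt.
\]
The last integral equals $\frac{1}{\beta p}c_e^p=V(x_e)$ by \eqref{V0}. Since this term is independent of the control, taking the supremum (and using that the supremum of a sum is at most the sum of the suprema) yields $W(y)\leq V_k(y)+V(x_e)$, i.e. $V(x)\leq V_k(x-x_e)+V(x_e)$.

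The argument is essentially soft. The only points that require care are the verification that the admissible control sets for $W$ and for the homogeneous problem genuinely coincide (in particular that the integrability requirement \eqref{constraint1} transfers, which it does since $\xi_t=c_t-c_e$ differs from $c_t$ by a constant), so that the two suprema range over the same family of strategies, and the elementary subadditivity inequality for the power utility, which is precisely what converts the shift by $c_e$ inside the utility into the additive constant $V(x_e)$. I do not expect a serious obstacle here; the main subtlety is the bookkeeping in passing from the pointwise utility inequalities to the corresponding inequalities between suprema, in particular that the constant $V(x_e)$ factors out of the supremum in the upper bound.
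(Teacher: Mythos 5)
Your proof is correct and follows essentially the same route as the paper's: both establish the bounds at the level of $W$ via monotonicity of $U$ for the lower bound and subadditivity of $c\mapsto c^p$ (with the identification $\frac{1}{\beta p}c_e^p=V(x_e)$ from \eqref{V0}) for the upper bound, then transfer to $V$ through \eqref{VW}. Your explicit check that the admissible strategy sets for $W$ and for the homogeneous problem coincide is a detail the paper leaves implicit, but it is not a different argument.
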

\begin{proof}
By the definitions of $W(y)$ and $V_k(x)$, we have
\begin{align}\label{W_down}
W(y)=\sup\limits_{\pi, \xi}\E\bigg[\int_0^\infty e^{-\beta
t}U(\xi_t+c_e)dt\bigg] \geqslant \sup\limits_{\pi, \xi}\E\bigg[\int_0^\infty
e^{-\beta t}U(\xi_t)dt\bigg]=V_k(y),
\end{align}
so the lower bound follows from \eqref{VW}. 
\par
On the other hand, recall that $U(c)=\frac{c^p}{p}$, so $U(x+y)\leqslant U(x)+U(y)$ for $x,y\geqslant 0$. Thus
\begin{align*}
W(y)&=\sup\limits_{\pi, \xi}\E\bigg[\int_0^\infty e^{-\beta
t}U(\xi_t+c_e)dt\bigg]\\
&\leqslant \sup\limits_{\pi, \xi}\E\bigg[\int_0^\infty
e^{-\beta t}U(\xi_t)dt\bigg]+\sup\limits_{\pi, \xi}\E\bigg[\int_0^\infty
e^{-\beta t}U(c_e)dt\bigg]=
V_k(y)+V(x_e),
\end{align*}
thanks to \eqref{V0}.
And consequently, the upper bound follows from \eqref{VW}. 
\end{proof}

Based on this result, we can establish the continuity of $V$. 
\begin{corollary}
The value function $V(x)$ is continuous on $[x_e, +\infty)$.
\end{corollary}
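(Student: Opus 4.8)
The plan is to deduce continuity of $V$ on $[x_e, +\infty)$ directly from the two-sided bound established in \thmref{Thm:V_down}, namely
\begin{align*}
V_k(x-x_e)\leqslant V(x)\leqslant V_k(x-x_e)+V(x_e), \quad x> x_{e}.
\end{align*}
First I would observe that on the open interval $(x_e, +\infty)$ the continuity is essentially automatic. Indeed, $V_k$ is given by an explicit power function (see \eqref{Vk}), hence $V_k(x-x_e)$ is continuous for $x>x_e$. Moreover, the analysis of the non-homogeneous case (culminating in the regularity result promised in Section 6) will show $V\in C^2$, indeed $C^3$, on $(x_e,+\infty)$, so interior continuity is not the issue. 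The only genuine work is continuity \emph{at the left endpoint} $x_e$.

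At the endpoint I would take the limit $x\downarrow x_e$, i.e. $y=x-x_e\downarrow 0$, in the bound. Since $V_k(y)=\frac{\max\{\kappa,k\}^p}{p(\kappa(1-p)+\max\{\kappa,k\}p)}y^p$ and $0<p<1$, we have $V_k(y)\to 0$ as $y\downarrow 0$. Applying this to both sides of the inequality gives
\begin{align*}
0\leqslant \liminf_{x\downarrow x_e}V(x)\leqslant \limsup_{x\downarrow x_e}V(x)\leqslant V(x_e),
\end{align*}
where I have used $V_k(x-x_e)\to 0$ on the left and $V_k(x-x_e)+V(x_e)\to V(x_e)$ on the right. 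To upgrade this to an exact limit equal to $V(x_e)$, I would separately establish the matching lower bound $\liminf_{x\downarrow x_e}V(x)\geqslant V(x_e)$. This should follow from a monotonicity/feasibility argument: for $x$ slightly above $x_e$ the strategy that stays near the forced state $X_t\equiv x_e$ (investing nothing and consuming $kX_t+l$) remains admissible and yields expected discounted utility arbitrarily close to $\frac{1}{\beta p}c_e^p = V(x_e)$ by \eqref{V0}, since $V$ is non-decreasing in wealth. Combining the two one-sided bounds gives $\lim_{x\downarrow x_e}V(x)=V(x_e)$, which is right-continuity at $x_e$.

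The main obstacle, therefore, is not the interior continuity but pinning down the value exactly at $x_e$ from the right, since the sandwich from \thmref{Thm:V_down} gives the correct \emph{upper} limit $V(x_e)$ immediately but only a \emph{lower} limit of $0$. The resolution hinges on the fact that $V$ is monotonically non-decreasing in $x$ (more wealth enlarges the admissible set), so that $V(x)\geqslant V(x_e)$ cannot be expected; rather, one argues that as $x\downarrow x_e$ the feasible region collapses and the achievable utility converges from above to the forced value $V(x_e)$. The cleanest way to make this rigorous is to note that the upper bound in \eqref{V_downup} already forces $\limsup_{x\downarrow x_e}V(x)\leqslant V(x_e)$, while admissibility of the constant-wealth strategy at $x_e$ together with a continuity-of-payoff estimate in $y$ gives the reverse inequality. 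Once both one-sided limits agree with $V(x_e)$, continuity on the closed half-line $[x_e,+\infty)$ follows, completing the proof.
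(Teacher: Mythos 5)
Your treatment of the left endpoint is essentially the paper's own argument and is sound: the upper bound in \eqref{V_downup} together with $V_k(0+)=0$ forces $\limsup_{x\downarrow x_e}V(x)\leqslant V(x_e)$, while monotonicity of $V$ in wealth (concretely, for $X_0=x>x_e$ the strategy $(c_t,\pi_t)=(kX_t+l,0)$ is admissible, keeps $X_t\geqslant x_e$, and consumes $c_t\geqslant c_e$, so $V(x)\geqslant \frac{1}{\beta p}c_e^p=V(x_e)$ by \eqref{V0}) gives the matching lower bound $V(x_e+)\geqslant V(x_e)$. One internal inconsistency: your closing paragraph asserts that ``$V(x)\geqslant V(x_e)$ cannot be expected,'' which contradicts the monotonicity you invoke just before it; the inequality does hold and is exactly the half of the argument you need.

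The genuine gap is the interior step. The sandwich $V_k(x-x_e)\leqslant V(x)\leqslant V_k(x-x_e)+V(x_e)$ cannot yield continuity on $(x_e,+\infty)$: the two bounding functions differ by the \emph{positive constant} $V(x_e)$, so a function trapped between them can still have jump discontinuities of size up to $V(x_e)$. Your fallback, citing the $C^2$/$C^3$ regularity ``promised in Section 6,'' is circular: that theorem is proved only later, and its proof runs through the Section 5 properties of $V$ — this very continuity result, then continuity and strict monotonicity of $V_x$, then the bijection property of \corref{Cor:bj} on which the dual transformation \eqref{vV0}--\eqref{xy} rests — so it cannot be invoked here. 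The paper closes the interior step with a one-line, self-contained argument: $V$ is concave (the standard consequence of linear dynamics, concave $U$, and the fact that the admissible set is convex — note the constraint $c_t\geqslant kX_t+l$ is preserved under convex combinations of strategies and the corresponding wealth processes), and a finite concave function is continuous on an open interval. You should replace your interior step with this concavity argument; your endpoint argument can then stand essentially as written.
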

\begin{proof}
The concavity of $V(x)$ implies its continuity on $(x_e, +\infty)$. The monotonicity of $V(x)$ implies $V(x_e)\leqslant V(x_e+)$; whereas the upper bound in \eqref{V_downup} indicates $V(x_e+)\leqslant V(x_e)$ as $V_{k}(0+)=0$. Hence, we established the continuity of $V(x)$ on $[x_e, +\infty)$.
\end{proof} 

In the following section, we study the continuity of the derivative of $V$.

%%%%%%%%%%%%%%%%%%%%%%%%%%%%%%%%%%%%%%%%%%

\subsection{Properties of the first-order derivative}

\begin{theorem}
The first-order derivative function $V_x(x)$ exists and is continuous on $(x_e, +\infty)$.
\end{theorem}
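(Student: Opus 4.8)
The plan is to combine the concavity of $V$ (already used to prove continuity in the preceding corollary) with the viscosity solution property of \lemref{lemmaHJB}, so that the entire statement reduces to excluding kinks. First I would record the purely structural consequences of concavity on $(x_e,+\infty)$: the one-sided derivatives $V_x^-(x)$ and $V_x^+(x)$ exist and are finite everywhere, are nonincreasing, satisfy $V_x^-(x)\geq V_x^+(x)$, and obey $\lim_{y\downarrow x}V_x^\pm(y)=V_x^+(x)$ and $\lim_{y\uparrow x}V_x^\pm(y)=V_x^-(x)$. Hence $V_x$ is automatically continuous wherever it exists, and $V$ is differentiable at $x_0$ exactly when $V_x^-(x_0)=V_x^+(x_0)$. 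The theorem is therefore equivalent to the claim that $V$ has no downward kink, i.e. $V_x^-(x_0)>V_x^+(x_0)$ is impossible for any $x_0\in(x_e,+\infty)$. I would also note that $V$ is strictly increasing, so $V_x^\pm>0$ throughout.

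To rule out a kink at $x_0$, I would extract two opposing inequalities for the reduced Hamiltonian
\[
\Phi(q):=\sup_{c\geq kx_0+l}\Big(\tfrac{c^p}{p}-cq\Big)+rx_0q,
\]
which is convex in $q$. For the first, at a downward kink the second-order superjet $D^{2,+}V(x_0)$ contains $(q,X)$ for every slope $q\in[V_x^+(x_0),V_x^-(x_0)]$ and every $X\in\R$; feeding these into the subsolution inequality of \eqref{HJB} and letting $X\to-\infty$ makes the term $\sup_\pi(\tfrac12\sigma^2\pi^2X+\mu\pi q)=-\tfrac{\mu^2q^2}{2\sigma^2 X}$ vanish, leaving $\Phi(q)\geq\beta V(x_0)$ for all $q$ in the kink interval. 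For the reverse inequality at the two endpoints I would invoke Alexandrov's theorem: being concave, $V$ is twice differentiable almost everywhere, and at each such point the HJB equation holds classically with $V_{xx}<0$ (otherwise the $\pi$-supremum would be $+\infty$), giving $\beta V(x)=-\tfrac{\mu^2V_x^2}{2\sigma^2V_{xx}}+\Phi(V_x(x))\geq\Phi(V_x(x))$. Letting twice-differentiable points $x_n\downarrow x_0$ and $x_n\uparrow x_0$, and using $V(x_n)\to V(x_0)$, $V_x(x_n)\to V_x^+(x_0)$ (respectively $V_x^-(x_0)$) together with continuity of $\Phi$, I obtain $\beta V(x_0)\geq\Phi(V_x^+(x_0))$ and $\beta V(x_0)\geq\Phi(V_x^-(x_0))$.

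Combining the two inequalities forces $\Phi(V_x^+(x_0))=\Phi(V_x^-(x_0))=\beta V(x_0)$ while $\Phi\geq\beta V(x_0)$ in between; by convexity this is possible only if $\Phi$ is constant on $[V_x^+(x_0),V_x^-(x_0)]$. But $\Phi'(q)=rx_0-c^*(q)$, where $c^*(q)$ is the constrained maximizer: in the unconstrained regime $c^*(q)=q^{-1/(1-p)}$ is strictly decreasing, and in the binding regime $c^*(q)\equiv kx_0+l$, whence $\Phi'(q)=(r-k)x_0-l>0$ precisely because $x_0>x_e=\tfrac{l}{r-k}$. In either case $\Phi$ cannot be constant on a nondegenerate interval, a contradiction; hence there is no kink and the theorem follows. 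The step I expect to be delicate is the reverse inequality: the supersolution property is vacuous at a downward kink, since the relevant subjet is empty, so the endpoint bounds $\beta V(x_0)\geq\Phi(V_x^\pm(x_0))$ cannot be read off locally and must instead be manufactured from the almost-everywhere classical equation together with a one-sided limit. The decisive structural input that finally closes the argument is the strict sign $\Phi'\neq0$, which is exactly where the standing hypothesis $x_0>x_e$ (equivalently $l>0$, $k<r$) enters.
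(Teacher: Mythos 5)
Your proposal is correct and follows essentially the same route as the paper's proof: reduce via concavity to excluding a kink, use the subsolution property with slopes in the kink interval and second derivative tending to $-\infty$ to get $\beta V(x_0)\leq g(x_0,\theta)$, use the a.e.\ twice differentiability of the concave value function to get $\beta V(x_0)\geq g(x_0,V_x(x_0\pm))$, and conclude by convexity that $g(x_0,\cdot)$ would be constant on the kink interval, which the structure of $g$ and $x_0>x_e$ forbid. The only differences are cosmetic (superjet language in place of explicit quadratic test functions, and your $\Phi$ freezes $x_0$ where the paper works with $g(x_n,\cdot)$ and passes to the limit in both arguments), so no gap.
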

\begin{proof}
As $V(x)$ is concave, the right and left derivatives
\[ V_x(x\pm):=\lim\limits_{y\rightarrow x\pm}\frac{V(y)-V(x)}{y-x} \]
are well-defined. By the monotonicity and concavity of $V(x)$, $V_x(x\pm)$ are both decreasing and satisfy 
$0\leqslant V_x(x+)\leqslant V_x(x-)<+\infty$. To prove the continuity of $V_{x}(x)$, by Darboux's theorem, it suffices to prove that
$V_x(x+)=V_x(x-)$.
\par
We argue by contradiction. Suppose that there exists $x_0>x_e$, such that
$V_x(x_0+)<V_x(x_0-)$. Then, for any $\theta\in (V_x(x_0+), V_x(x_0-))$ and $N>0$, there exists
$\varepsilon>0$ that
\[ V(x)\leqslant V(x_0)+\theta(x-x_0)-N(x-x_0)^2, \quad x\in(x_0-\varepsilon, x_0+\varepsilon). \]
Let $\phi(x)=V(x_0)+\theta(x-x_0)-N(x-x_0)^2$, then $V(x)-\phi(x)$ attains its local maximum value 0 at the point $x_0$. Because $V(x)$ is the viscosity solution to \eqref{HJB} and $\phi(x)\in C^2$, 
\begin{align*}
\beta \phi(x_0)-\sup\limits_{\pi}\bigg(\frac{1}{2}\sigma^2\pi^2\phi_{xx}(x_0)+\mu\pi \phi_x(x_0)\bigg)-\sup\limits_{c\geqslant kx_0+l}\bigg(\frac{c^p}{p}-c\phi_x(x_0)\bigg)-rx\phi_x(x_0)\leqslant 0,
\end{align*}
and consequently,
\begin{align*}
\beta V(x_0)=\beta \phi(x_0) &\leqslant \sup\limits_{\pi}\bigg(\frac{1}{2}\sigma^2\pi^2(-2N)+\mu\pi
\theta\bigg)+\sup\limits_{c\geqslant kx_0+l}\bigg(\frac{c^p}{p}-c\theta\bigg)+rx_0\theta, \\
&= \frac{\mu^2\theta^2}{4\sigma^2N}+\sup\limits_{c\geqslant kx_0+l}\bigg(\frac{c^p}{p}-c\theta\bigg)+rx_0\theta.
\end{align*}
Denoting $g(x,\theta)=\sup\limits_{c\geqslant kx+l}\bigg(\frac{c^p}{p}-c\theta\bigg)+rx\theta$ and letting $N\rightarrow+\infty$ in the above inequality, we obtain
\begin{align}\label{V<g}
\beta V(x_0)\leqslant g(x_{0},\theta), \quad \theta\in(V_x(x_0+), V_x(x_0-)).
\end{align}
\par
On the other hand, due to the concavity of $V(x)$, $V_x(x)$ is continuous and differentiable
almost everywhere, so there exists a sequence $\{x_n\}$ going up to $x_0$ such that $V_{xx}(x_n)$ exists for all $n$. Because $V(x)$ is the viscosity solution to \eqref{HJB}, we see 
\begin{align*}
\beta V(x_n)
&=\sup\limits_{\pi}\bigg(\frac{1}{2}\sigma^2\pi^2V_{xx}(x_n)+\mu\pi V_x(x_n)\bigg)+\sup\limits_{c\geqslant kx_n+l}\bigg(\frac{c^p}{p}-cV_x(x_n)\bigg)-rx_nV_x(x_n)\\
&\geqslant \sup\limits_{c\geqslant kx_n+l}\bigg(\frac{c^p}{p}-cV_x(x_n)\bigg)-rx_nV_x(x_n)\\
&= g(x_{n},V_x(x_n))
\end{align*}
Because $V(x)$ and $g(x,\theta)$ are both continuous, we deduce
\begin{align*}
\beta V(x_0)\geqslant g(x_{0},V_x(x_0-)).
\end{align*}
Similarly, by choosing a decreasing sequence, we can prove 
\begin{align*}
\beta V(x_0)\geqslant g(x_{0},V_x(x_0+)).
\end{align*}
Note that the mapping $\theta\mapsto g(x_{0},\theta)$ is convex, thus
\[ \beta V(x_0)\geqslant \max\{g(x_{0},V_x(x_0+)), g(x_{0},V_x(x_0-))\}\geqslant g(x_{0},\theta), \quad \theta\in(V_x(x_0+), V_x(x_0-)). \]
Together with \eqref{V<g} we infer that $\theta\mapsto g(x_{0},\theta)$ is a constant funcion on $(V_x(x_0+), V_x(x_0-))$. By the definition of $g(x,\theta)$, we have
\begin{align*}
g(x_{0},\theta)=
\left\{
\begin{array}{ll}
\frac{1-p}{p}\theta^{\frac{p}{p-1}}+rx_0\theta, & \hbox{if } {\theta}^{\frac{1}{p-1}}\geqslant kx_0+l;\\
\frac{(kx_0+l)^p}{p}+[(r-k)x_0-l]\theta, & \hbox{if } {\theta}^{\frac{1}{p-1}}\leqslant kx_0+l.
\end{array}
\right.
\end{align*}
Consequently, $\theta\mapsto g(x_{0},\theta)$ cannot be a constant function as $x_0>x_e$ (which implies $(r-k)x_0-l>(r-k)x_e-l=0$), leading to a contradiction. 
\end{proof}

%%%%%%%%%%%%%%%%%%%%%%%%%%%%%%%%%%%%%%%%%%

We next give some bounds for the first-order derivative function. 
\begin{lemma}\label{Lem:Vx0} We have
\[ V_x(x_e+):=\lim\limits_{x\rightarrow x_e+}V_x(x)=+\infty. \]
\end{lemma}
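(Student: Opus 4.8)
The plan is to exploit concavity. Since $V$ is concave and continuous on $[x_e,\infty)$ (the preceding corollary), the quantity in question is just the right derivative at the boundary, $V_x(x_e+)=\lim_{h\downarrow 0}\frac{V(x_e+h)-V(x_e)}{h}\in[0,+\infty]$. Hence it suffices to construct, for each small $h>0$, a \emph{single} admissible strategy starting from wealth $x_e+h$ whose payoff $f(h)$ satisfies $f(0)=V(x_e)$ and $\lim_{h\downarrow 0}\frac{f(h)-f(0)}{h}=+\infty$; because $f(h)\le V(x_e+h)$ and $f(0)=V(x_e)$, the difference quotient of $V$ is then forced to the same infinite limit. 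The conceptual point to keep in mind is that, unlike in the unconstrained Merton problem where $V_x$ blows up because consumption is driven to $0$, here the constraint keeps $c_t\ge c_e>0$, so the marginal utility $U'(c)=c^{p-1}\le c_e^{p-1}$ stays \emph{bounded}. The blow-up therefore cannot be manufactured through marginal utility; it must instead come from the sheer discounted \emph{quantity} of surplus wealth.

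Working under the standing assumption $k>0$ of this section, I would decompose the wealth as $X_t=x_e+X^{(2)}_t$ and run the two parts independently. Freezing the first part at $x_e$ via $(\pi^{(1)},c^{(1)})\equiv(0,c_e)$ is stationary, since $rx_e-c_e=0$ (recall $c_e=rx_e$). The surplus $X^{(2)}$, with $X^{(2)}_0=h$, is invested aggressively in the risky asset, $\pi^{(2)}_t=\lambda X^{(2)}_t$ for a large constant $\lambda>0$, while consuming at the minimal admissible rate $c^{(2)}_t=kX^{(2)}_t$; then $X^{(2)}$ is a geometric Brownian motion with mean growth rate $r-k+\mu\lambda$. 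The aggregate pair $(c_t,\pi_t)=(c_e+kX^{(2)}_t,\ \lambda X^{(2)}_t)$ keeps $X_t\ge x_e>0$ and meets the constraint $c_t\ge kX_t+l$ with equality; since the first and second moments of a geometric Brownian motion are finite and continuous in time, \eqref{constraint1} holds as well, so the strategy is admissible. Consequently $f(h):=\E\int_0^\infty e^{-\beta t}U(c_e+kX^{(2)}_t)\,dt\le V(x_e+h)<\infty$ (the upper bound of \thmref{Thm:V_down}), with $f(0)=\frac{c_e^p}{\beta p}=V(x_e)$ by \eqref{V0}.

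The last step is the derivative computation. By linearity of the geometric Brownian motion in its initial value, $X^{(2),h}_t=h\,X^{(2),1}_t$, so setting $Z_t:=kX^{(2),1}_t\ge 0$ gives $f(h)=\E\int_0^\infty e^{-\beta t}\tfrac{1}{p}(c_e+hZ_t)^p\,dt$. For fixed $t$ the quotient $\tfrac{(c_e+hZ_t)^p-c_e^p}{ph}$ increases to $c_e^{p-1}Z_t$ as $h\downarrow 0$ (concavity of $s\mapsto(c_e+s)^p$), so by monotone convergence
\[ \lim_{h\downarrow 0}\frac{f(h)-f(0)}{h}=c_e^{p-1}\,\E\!\int_0^\infty e^{-\beta t}Z_t\,dt=k\,c_e^{p-1}\int_0^\infty e^{-(\beta-r+k-\mu\lambda)t}\,dt. \]
Choosing $\lambda>\frac{\beta-r+k}{\mu}$ makes the exponent positive, so the integral diverges and the limit is $+\infty$, whence $V_x(x_e+)=+\infty$. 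I expect this choice of $\lambda$ to be the crux: one must pick the leverage large enough that the discounted \emph{expected} surplus wealth $\int_0^\infty e^{-\beta t}\E[X^{(2)}_t]\,dt$ diverges for the given discount rate $\beta$, yet — because the $p$-th moment of the geometric Brownian motion carries the stabilizing factor $e^{-\frac{p(1-p)}{2}\sigma^2\lambda^2 t}$ — the payoff $f(h)$, and with it $V(x_e+h)$, remain finite. Verifying that a single $\lambda$ reconciles these two competing requirements, together with the monotone-convergence interchange above, is the only genuinely non-routine point; everything else (admissibility, the moment estimates, and the concavity facts about $V$) is standard.
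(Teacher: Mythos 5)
Your proof is correct and takes essentially the same approach as the paper: the paper also freezes the wealth component $x_e$, runs the surplus as a leveraged geometric Brownian motion (it takes $\pi_t = hY_t$ with $h>\beta/\mu$ and consumes the surplus at rate $rY_t$ rather than your minimal rate $kY_t$), shows the difference quotient at $x_e$ diverges by interchanging limit and expectation (Fatou's lemma there, monotone convergence in your version), and concludes via concavity. The differences are cosmetic.
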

\begin{proof}
We come to prove the following equivalent conclusion
\[ W_y(0+)=+\infty, \]
where $W(y)$ is defined in \eqref{W}.
\par
Fix $h>\beta/\mu$. We choose the feedback controls
\begin{align*}
\pi_t=h Y_t, \quad
\xi_t=r Y_t,
\end{align*}
in \eqref{dY}, then $Y_t$ is the solution to the following SDE,
\begin{align*} 
\left\{
\begin{array}{l}
dY_t=
\mu h Y_t dt+\sigma h Y_t dW_t, \quad t\geqslant 0, \\
Y_{0}=y,
\end{array}
\right.
\end{align*}
or equivalently,
\begin{align*}
Y_t=ye^{(\mu h-\frac{\sigma^2 h^2}{2})t+\sigma h W_t}.
\end{align*}
As a consequence 
\begin{align*}
\xi_t=r Y_t=rye^{(\mu h-\frac{\sigma^2 h^2}{2})t+\sigma h W_t}.
\end{align*}
By \eqref{W}, we have
\begin{align*}
W(y)\geqslant \E\bigg[\int_0^\infty e^{-\beta t}\frac{1}{p}\bigg(rye^{(\mu h-\frac{\sigma^2 h^2}{2})t+\sigma h W_t}+c_e\bigg)^pdt\bigg].
\end{align*}
Thus, Fatou's lemma yields
\begin{align*}
\liminf\limits_{y\rightarrow 0+}\frac{W(y)-W(0)}{y}
&\geqslant
\liminf\limits_{y\rightarrow 0+}\frac{1}{y}\E\bigg[\int_0^\infty e^{-\beta
t}\frac{1}{p}\bigg(\bigg(rye^{(\mu h-\frac{\sigma^2 h^2}{2})t+\sigma h W_t}+c_e\bigg)^p-c_e^p\bigg)dt\bigg]\\
&\geqslant
\E\bigg[\int_0^\infty e^{-\beta
t}\frac{1}{p}\liminf\limits_{y\rightarrow 0+}\frac{\bigg(rye^{(\mu h-\frac{\sigma^2 h^2}{2})t
+\sigma h W_t}+c_e\bigg)^p-c_e^p}{y}dt\bigg]\\
&= \E\bigg[\int_0^\infty e^{-\beta t}c_e^{p-1}re^{(\mu h-\frac{\sigma^2 h^2}{2})t +\sigma h W_t}dt\bigg]\\
&= c_e^{p-1}r\int_0^\infty e^{-\beta t}e^{\mu ht} dt\\
&=+\infty.
\end{align*}
Therefore, we conclude from the concavity of $W(y)$ that 
\begin{align*}
\liminf\limits_{y\rightarrow0+}W_y(y)\geqslant \liminf\limits_{y\rightarrow 0+}\frac{W(y)-W(0)}{y}=+\infty.
\end{align*}
This completes the proof.
\end{proof}
\par
We now give an upper bound for the first-order derivative function.
\begin{lemma}\label{Lem:Vx_up}
We have
\begin{align}\label{Vx_up}
V_x(x)\leqslant \frac{\max\{\kappa,k\}^{p}x^p}{\kappa(x-x_e)},\quad x>x_e,
\end{align}
and hence $\lim\limits_{x\rightarrow+\infty} V_{x}(x)=0$.
\end{lemma}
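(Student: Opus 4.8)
The plan is to reduce the first-order estimate to an \emph{a priori} bound on $V$ itself by exploiting the concavity of $V$, and then to feed in the homogeneous comparison bound already available from \eqref{V_up}. Since $V$ is concave on $(x_e,+\infty)$ with $V(x_e)=\frac{1}{\beta p}c_e^p\ge 0$, the tangent line to $V$ at a point $x>x_e$ lies above the graph; evaluating it at the left endpoint $x_e$ gives $V(x_e)\le V(x)+V_x(x)(x_e-x)$, that is
\begin{align*}
V_x(x)\,(x-x_e)\le V(x)-V(x_e)\le V(x),\qquad x>x_e.
\end{align*}
This already accounts for both the factor $\frac{1}{x-x_e}$ and the blow-up as $x\downarrow x_e$ recorded in \lemref{Lem:Vx0}.

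Next I would substitute the comparison bound. Writing $m=\max\{\kappa,k\}$ and recalling from \thmref{homogeneous} that $V_k(x)=\frac{m^{p}}{p(\kappa(1-p)+mp)}x^{p}$, the displayed inequality together with $V(x)\le V_k(x)$ yields
\begin{align*}
V_x(x)\le \frac{V_k(x)}{x-x_e}=\frac{m^{p}x^{p}}{p(\kappa(1-p)+mp)(x-x_e)},\qquad x>x_e,
\end{align*}
which is already of the claimed shape $\frac{x^{p}}{x-x_e}$. Using $m\ge\kappa$ one has $\kappa(1-p)+mp\ge\kappa$, so this bound is no larger than $\frac{m^{p}x^{p}}{p\kappa(x-x_e)}$.

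The main obstacle is the \emph{sharp} constant. The crude secant argument above delivers the correct power $\frac{x^{p}}{x-x_e}$ but only with constant $\frac{1}{p\kappa}$, which exceeds the claimed $\frac{1}{\kappa}$ by the factor $\frac1p>1$; the loss is genuine, because $V(x_e)>0$ and $V$ does not touch $V_k$ at $x_e$, so the secant from $x_e$ is inherently lossy. To remove this factor I would not route the estimate through $V$ at all, but instead bound the marginal value directly, exploiting the exact homogeneous scaling $xV_k'(x)=pV_k(x)$ of the Merton value together with the $\kappa$-exponential martingale estimate already used in the proof of \lemref{varthm-2} (there the combination $(e^{-\beta t}\gamma_t^{-p})^{1/(1-p)}$ produced an integrable $e^{-\kappa t}$-type factor, whose time integral is exactly $\tfrac1\kappa$). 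Concretely, one perturbs the initial wealth and transports a near-optimal strategy for $x$ to a competitor for a nearby wealth level using the supermartingale $\gamma_tX_t$; the discounting and the state-price weight $\gamma_t$ then combine into precisely that $\kappa$-exponential, which is the mechanism that upgrades $\frac{1}{p\kappa}$ to $\frac{1}{\kappa}$. This perturbation/localization step, and the care needed at the non-negativity boundary $X_t\ge x_e$ (the same asymmetry that forces $V_x(x_e+)=+\infty$), is where the real work lies.

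Finally, the limit is immediate from \eqref{Vx_up}: as $x\to+\infty$ the right-hand side is asymptotically $\frac{m^{p}}{\kappa}x^{p-1}$, which tends to $0$ since $0<p<1$; hence $\lim_{x\to+\infty}V_x(x)=0$.
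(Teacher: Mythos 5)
There is a genuine gap, and you have located it yourself: everything up to the constant $\frac{1}{p\kappa}$ is correct, but the step that would upgrade it to the claimed $\frac{1}{\kappa}$ is only a programme, not a proof. Your secant inequality $V_x(x)(x-x_e)\le V(x)-V(x_e)\le V(x)$ combined with \eqref{V_up} indeed gives
\begin{align*}
V_x(x)\ \le\ \frac{V_k(x)}{x-x_e}\ =\ \frac{m^{p}x^{p}}{p\,(\kappa(1-p)+mp)\,(x-x_e)},\qquad m=\max\{\kappa,k\},
\end{align*}
and (e.g.\ when $k\le\kappa$, so $m=\kappa$) this is weaker than \eqref{Vx_up} by exactly the factor $\frac1p$. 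The proposed repair --- transporting near-optimal strategies between nearby initial wealth levels via the supermartingale $\gamma_tX_t$ and extracting a $\kappa$-exponential --- is never executed, and as described it is not convincing: the $1/\kappa$ in \eqref{Vx_up} does not arise as $\int_0^\infty e^{-\kappa t}\,dt$ weighting a marginal utility, and you yourself concede that the boundary behaviour at $x_e$ (where $V_x$ blows up) is an unresolved obstruction in that scheme. A proof that stops where ``the real work lies'' begins is not a proof of the stated bound.

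The idea you are missing is deterministic and short: a scaling (Euler-type) inequality for the shifted problem. Regard the value function \eqref{W} as a function $W(y,c_e)$ of both the initial state $y=x-x_e$ and the parameter $c_e$. Since the dynamics \eqref{dY}, the constraints \eqref{Y_xi} and the power utility are jointly positively homogeneous, one has $W(\lambda y,\lambda c_e)=\lambda^pW(y,c_e)$ for $\lambda\ge 0$; moreover $c\mapsto W(1,c)$ is increasing, so $y\mapsto y^{-p}W(y,c_e)=W(1,y^{-1}c_e)$ is decreasing in $y$, and differentiating yields
\begin{align*}
yW_y(y)\ \le\ pW(y),\qquad y>0.
\end{align*}
This is precisely your secant inequality improved by the factor $p$ (for the homogeneous benchmark it is an identity, $xV_k'(x)=pV_k(x)$, which you noted yourself but did not exploit on $W$). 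With it,
\begin{align*}
V_x(x)=W_y(x-x_e)\ \le\ \frac{pV(x)}{x-x_e}\ \le\ \frac{pV_k(x)}{x-x_e}\ =\ \frac{m^px^p}{(\kappa(1-p)+mp)(x-x_e)}\ \le\ \frac{m^px^p}{\kappa(x-x_e)},
\end{align*}
using $\kappa(1-p)+mp\ge\kappa$. Your final deduction of $\lim_{x\to+\infty}V_x(x)=0$ from \eqref{Vx_up} is fine, but it rests on the sharp bound you did not establish.
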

\begin{proof}
If we can show 
\begin{align}\label{Wy}
yW_y(y)\leqslant pW(y),\quad y>0, 
\end{align}
then from \eqref{VW}, \eqref{V_up} and \eqref{Vk}, 
\[ V_x(x)=W_y(x-x_e)\leqslant \frac{pW(x-x_e)}{x-x_e}=\frac{pV(x)}{x-x_e}\leqslant \frac{pV_k(x)}{x-x_e}\leqslant \frac{\max\{\kappa,k\}^{p}x^p}{\kappa(x-x_e)}, \]
the desired result follows. 
To show \eqref{Wy}, let $W(y, c_e)$ be the value function defined by \eqref{W}. From \eqref{dY}, \eqref{Y0} and \eqref{Y_xi}, we see that $W(\cdot, \cdot)$ is homogeneous of degree $p$, namely
\[ W(\lambda y, \lambda c_e)=\lambda^pW(y, c_e), \quad \lambda\geqslant 0. \]
Let $\lambda=y^{-1}$,
\[ W(1, y^{-1}c_e)=y^{-p}W(y, c_e). \]
Clearly $W(1, y^{-1}c_e)$ is decreasing in $y$, so 
\[ 0\geqslant \partial_y[y^{-p}W(y, c_e)]=\frac{W_y(y, c_e)y^p-pW(y, c_e)y^{p-1}}{y^{2p}}, \]
giving \eqref{Wy}.
\end{proof}
\begin{lemma}\label{Lem:Vxx}
We have $V_x(x)>0$ in $(x_e, +\infty)$, and $V_{xx}(x)<0$ almost everywhere in $(x_e, +\infty)$.
\end{lemma}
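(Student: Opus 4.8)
The plan is to derive both claims from three ingredients already in hand: the concavity of $V$, the two one-sided bounds on $V_x$ from \lemref{Lem:Vx0} and \lemref{Lem:Vx_up}, and the viscosity property of $V$ evaluated at points where $V$ is twice differentiable. Throughout I would use that $V\in C^1(x_e,+\infty)$ with $V_x$ continuous and nonincreasing (by concavity).

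For the strict positivity $V_x>0$, first I would observe that since $V_x$ is nonincreasing and \lemref{Lem:Vx_up} gives $\lim_{x\to+\infty}V_x(x)=0$, the infimum of $V_x$ equals its limit at $+\infty$, so $V_x(x)\geq 0$ everywhere. To upgrade this to a strict inequality, suppose $V_x(x_0)=0$ at some $x_0>x_e$. Being nonincreasing and nonnegative, $V_x$ would then be squeezed to $0$ on all of $[x_0,+\infty)$, so $V$ would be constant there. This contradicts the lower bound $V(x)\geq V_k(x-x_e)\to+\infty$ supplied by \thmref{Thm:V_down} (recall $V_k(y)=Cy^p$ with $p>0$). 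Hence $V_x(x)>0$ on $(x_e,+\infty)$.

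For $V_{xx}<0$ almost everywhere, the starting point is that $V_x$, being monotone, is differentiable Lebesgue-almost everywhere; and since $V\in C^1$, the fundamental theorem of calculus upgrades differentiability of $V_x$ at a point $x_0$ into a genuine second-order Taylor expansion of $V$ at $x_0$. Concavity forces $V_{xx}(x_0)\leq 0$ at every such point, so the only thing to rule out is $V_{xx}(x_0)=0$. At $x_0$ the pair $(V_x(x_0),V_{xx}(x_0))$ lies in both the second-order super- and sub-jets of $V$, so the viscosity sub- and supersolution inequalities combine to show that the HJB equation \eqref{HJB} holds classically at $x_0$. Now if $V_{xx}(x_0)=0$, then because $V_x(x_0)>0$ by the first part, the inner maximization degenerates, $\sup_{\pi}\big(\tfrac12\sigma^2\pi^2V_{xx}(x_0)+\mu\pi V_x(x_0)\big)=\sup_{\pi}\mu\pi V_x(x_0)=+\infty$, which is incompatible with the finite identity \eqref{HJB}. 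This contradiction yields $V_{xx}(x_0)<0$ at every point of twice differentiability, i.e. almost everywhere.

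The step I expect to be the main obstacle is the rigorous jet-inclusion argument in the second part, namely converting the abstract viscosity notion into the pointwise classical identity \eqref{HJB} at points of twice differentiability; this is where care is needed to invoke the super/sub-jet characterization correctly. Once that identity is secured, the exclusion of $V_{xx}=0$ is immediate from the blow-up of the $\pi$-supremum, and the remaining arguments are routine consequences of concavity and the already-established bounds.
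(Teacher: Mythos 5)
Your proposal is correct and takes essentially the same route as the paper: concavity gives $V_{xx}\leq 0$ a.e., and at any point of twice differentiability where $V_{xx}(x_0)=0$, the supremum over $\pi$ in \eqref{HJB} blows up to $+\infty$ (since $V_x(x_0)>0$ and $\mu>0$), contradicting the pointwise HJB identity obtained from the viscosity property --- exactly the paper's contradiction. The only cosmetic difference is in the first claim, where the paper simply observes that a strictly increasing concave function has no stationary point, whereas you combine $\lim_{x\to+\infty}V_x(x)=0$ from \lemref{Lem:Vx_up} with the growth lower bound of \thmref{Thm:V_down}; both arguments are valid.
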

\begin{proof}
Because $V$ is concave and strictly increasing, it has no stationary point, resulting in $V_x>0$. 
The concavity of $V(x)$ implies that $V_{xx}(x)\leq 0$ almost everywhere. If $V_{xx}(x_{0})=0$ holds at a point $x_0>x_e$, then 
\begin{align*}
\bigg[\beta V-\sup\limits_{\pi}\bigg(\frac{1}{2}\sigma^2\pi^2V_{xx}+\mu\pi V_x\bigg)-\sup\limits_{c\geqslant kx+l}\bigg(\frac{c^p}{p}-cV_x\bigg)-rxV_x\bigg](x_0)=-\infty,
\end{align*}
by virtue of $V_x(x_{0})>0$ and $\mu>0$, 
contradicting the HJB equation \eqref{HJB}. Thus, $V_{xx}(x)<0$ holds at any point where $V$ is twice differentiable.
\end{proof}

\begin{corollary}\label{Cor:bj}
We have that $V_{x}(x)$ is a decreasing, bijection mapping from $(x_{e},\infty)$ to $(0,\infty)$.
\end{corollary}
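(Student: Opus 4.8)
The plan is to assemble this corollary directly from the facts already established about $V_x$, treating it as a bookkeeping consequence of the preceding lemmas rather than proving anything genuinely new. The ingredients I would invoke are: the continuity of $V_x$ on $(x_e,+\infty)$ proved in the preceding theorem; the strict positivity $V_x>0$ together with the almost-everywhere strict concavity $V_{xx}<0$ from \lemref{Lem:Vxx}; the left boundary behavior $V_x(x_e+)=+\infty$ from \lemref{Lem:Vx0}; and the decay $\lim_{x\to+\infty}V_x(x)=0$ from \lemref{Lem:Vx_up}. Once these are in hand, the assertion reduces to a standard statement about continuous, strictly monotone functions.

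First I would upgrade monotonicity from ``non-increasing'' to ``strictly decreasing.'' Since $V$ is concave, $V_x$ is automatically non-increasing on $(x_e,+\infty)$. Suppose, for contradiction, that $V_x$ fails to be strictly decreasing; then there exist $x_1<x_2$ in $(x_e,+\infty)$ with $V_x(x_1)=V_x(x_2)$, and by the non-increasing property $V_x$ must be constant on $[x_1,x_2]$. But then $V$ is affine on $(x_1,x_2)$, so $V_{xx}\equiv 0$ wherever it exists there, which contradicts the fact from \lemref{Lem:Vxx} that $V_{xx}<0$ almost everywhere. Hence $V_x$ is strictly decreasing on $(x_e,+\infty)$, and in particular injective.

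Finally I would read off surjectivity onto $(0,\infty)$. The function $V_x$ is continuous and strictly decreasing, with boundary limits $V_x(x_e+)=+\infty$ and $\lim_{x\to+\infty}V_x(x)=0$. By the intermediate value theorem its image is exactly the open interval $(0,\infty)$: every value in $(0,\infty)$ is attained (surjectivity), no value is attained twice (injectivity, from the previous step), and the endpoints $0$ and $+\infty$ are approached but never reached. This is precisely the claim that $V_x$ is a decreasing bijection from $(x_e,\infty)$ onto $(0,\infty)$.

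I do not expect a serious obstacle here; the only point requiring care is the passage to strict monotonicity, since ``$V_{xx}<0$ almost everywhere'' does not by itself forbid $V_x$ from having flat stretches unless one rules them out. The argument above handles this cleanly by observing that any flat stretch of the continuous non-increasing function $V_x$ would force $V_{xx}=0$ on a set of positive measure. The remainder is the elementary topology of continuous monotone maps.
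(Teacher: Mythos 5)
Your proposal is correct and follows essentially the same route as the paper: continuity plus the boundary limits from \lemref{Lem:Vx0} and \lemref{Lem:Vx_up} give the range, and strict monotonicity is obtained exactly as in the paper's proof, by noting that a flat stretch of $V_x$ would force $V_{xx}=0$ on an interval, contradicting \lemref{Lem:Vxx}. The only difference is that you spell out the intermediate-value step explicitly, which the paper leaves implicit.
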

\begin{proof}
As $V_{x}(x)$ is continuous and decreasing, by \lemref{Lem:Vx0} and \lemref{Lem:Vx_up}, it suffices to prove that it is strictly decreasing.
If $V_{x}(x)$ is not strictly decreasing, then its monotonicity implies it is a constant on some interval; and consequently $V_{xx}(x)=0$ on that interval, contradicting \lemref{Lem:Vxx}. 
\end{proof}

%%%%%%%%%%%%%%%%%%%%%%%%%%%%%%%%%%%%%%%%%%%%%%%%%%%%%%%%%%%%%%%%%%%%

\section{Dual transformation and $C^3$ smoothness }
\setcounter{equation}{0}
We now study the higher order differentiability of the value function in non-homogeneous case. The main tool is dual transformation, which turns the highly nonlinear HJB equation \eqref{HJB} into a semi-linear one that can be dealt with by classical ODE theory.

\par
Thanks to Lemma \ref{Lem:Vxx}, we can see that $V(x)$ satisfies the HJB equation \eqref{HJB} almost
everywhere (we will omit a.e. in the followings unless otherwise specified). Clearly 
\[ \pi^*(x)=\argmax_{\pi}\bigg(\frac{1}{2}\sigma^2\pi^2V_{xx}(x)+\mu\pi V_x(x)\bigg)=-\frac{\mu}{\sigma^2}\frac{V_x(x)}{V_{xx}(x)}, \]
so \eqref{HJB} can be rewritten as
\begin{align}\label{HJB1}
\beta V(x)+\frac{\mu^2}{2\sigma^2}\frac{V_x^2(x)}{V_{xx}(x)}-\sup\limits_{c\geqslant
kx+l}\bigg(\frac{c^p}{p}-cV_x\bigg)-rxV_x(x)=0, \quad x>x_e.
\end{align}
\par
Now, define the dual transformation of $V(x)$ (see Pham (2009)) as
\begin{align}\label{vV0}
v(y)=\sup\limits_{x> x_e}(V(x)-xy), \quad y>0.
\end{align}
By \corref{Cor:bj}, we can define a bijection mapping and its inverse mapping as 
\begin{align}\label{xy}
y=V_x(x)>0, \quad x=x(y):=V_x^{-1}(y)>x_{e}.
\end{align}
Clearly $x(y)$ is a maximizer for \eqref{vV0}, so 
\begin{align}\label{vV}
v(y)=V(x(y))-x(y)y, \quad y>0.
\end{align}
Moreover, by \corref{Cor:bj}, 
\begin{align}\label{vinfity}
v(+\infty)=\lim_{y\to+\infty}V(x(y))-x(y)y=-\infty.
\end{align}

\par
Differentiating \eqref{vV}, we get 
\begin{align}\label{vy=}
v_y(y)&=V_x(x(y))x'(y)-x'(y)y-x(y) =-x(y)<0, \\
\label{vyy=} v_{yy}(y) &=-x'(y)=-\frac{1}{V_{xx}(x(y))} >0\quad \text{a.e.}.
\end{align}
Moreover, by \corref{Cor:bj}, 
\begin{align}\label{vyinfity}
v_y(+\infty)=-\lim_{y\to+\infty}x(y)=-x_e.
\end{align}

Inserting \eqref{vy=} into \eqref{vV}, it follows
\begin{align}\label{v=V}
V(x(y))=v(y)-yv_y(y).
\end{align}
Applying \eqref{vy=}, \eqref{vyy=} and
\eqref{v=V}, \eqref{HJB1} then becomes
\begin{align}\label{v1}
\beta(v(y)-yv_y(y))-\frac{\mu^2}{2\sigma^2}y^2v_{yy}(y)-\sup\limits_{c\geqslant -kv_y(y)+l}\bigg(\frac{c^p}{p}-cy\bigg)+ryv_y(y)=0, \quad y>0.
\end{align}
This is a semi-linear ODE, which is degenerate at $y=0$.
\par
We now study the differentiability of the solution to \eqref{v1}.
Define
\begin{align*}
G(u, y)=\sup\limits_{c\geqslant u}\bigg(\frac{c^p}{p}-cy\bigg), \quad u>0, \;y>0. 
\end{align*}
Then \eqref{v1} can be written as 
\begin{align}\label{v}
\beta(v(y)-yv_y(y))-\frac{\mu^2}{2\sigma^2}y^2v_{yy}(y)-G(-kv_y(y)+l, y)+ryv_y(y)=0, \quad y>0.
\end{align}
\par 
Clearly, 
\begin{align*}
G(u, y)&=
\left\{
\begin{array}{ll}
\frac{1-p}{p}y^{\frac{p}{p-1}}, & \quad\hbox{if } y^{\frac{1}{p-1}}\geqslant u;\\[10pt]
\frac{u^p}{p}-uy, &\quad \hbox{if } y^{\frac{1}{p-1}}\leqslant u,
\end{array}
\right\}
\in C((0, \infty)\times(0, \infty)).
\end{align*}
Moreover,
\begin{align}\label{Gu}
G_u(u, y)&=
\left\{
\begin{array}{ll}
0, & \quad\hbox{if } y^{\frac{1}{p-1}}> u;\\
u^{p-1}-y, & \quad\hbox{if } y^{\frac{1}{p-1}}< u,
\end{array}
\right\}
=-(y-u^{p-1})^+, \\\label{Gy}
G_y(u, y)&=
\left\{
\begin{array}{ll}
-y^{\frac{1}{p-1}}, & \quad\hbox{if } y^{\frac{1}{p-1}}> u;\\
-u, & \quad\hbox{if } y^{\frac{1}{p-1}}< u,
\end{array}
\right\}
=-\max\{y^{\frac{1}{p-1}}, u\}.
\end{align}
It follows that $G(u, y)\in C^{1, 1}((0, \infty)\times(0, \infty))$. We see from \eqref{v} that $v(y)\in C^3((0, +\infty))$ (see Lady{\v{z}}enskaja, Solonnikov and Ural'ceva (1967)). Moreover, by \eqref{vy=} and \eqref{xy}, we can see that
\begin{align}\label{vy<-x0}
v_y(y)=-x(y)<-x_{e}=-\frac{l}{r-k}
\end{align}
for all $y>0$.

%
%%%%%%%%%%%%%%%%%%%%%%%%%%%%%%%%%%%%%%%%%%%%%%%
%
\subsection{Higher order differentiability}
We now state our main result. 
\begin{theorem}
The value function $V$ of the problem \eqref{V} belongs to $C^3((x_e, +\infty))$.
\end{theorem}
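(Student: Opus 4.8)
The plan is to transfer the $C^3$ regularity of the dual function $v$, already established above through classical regularity theory for the semi-linear ODE \eqref{v}, back to the primal value function $V$ via the dual relations \eqref{vy=}--\eqref{v=V}. The bridge is the identity $x(y)=-v_y(y)$ from \eqref{vy=}, which expresses the inverse mapping $V_x^{-1}$ purely in terms of $v$. Since $v\in C^3((0,+\infty))$, we immediately get $x(\cdot)=-v_y(\cdot)\in C^2((0,+\infty))$ with $x'(y)=-v_{yy}(y)$. By \corref{Cor:bj}, $x(\cdot)$ is a continuous strictly decreasing bijection from $(0,+\infty)$ onto $(x_e,+\infty)$. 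Hence, if I can show $x'(y)\neq 0$ for every $y>0$, the inverse function theorem will yield $V_x=x^{-1}\in C^2((x_e,+\infty))$, and therefore $V\in C^3((x_e,+\infty))$, which is exactly the assertion.

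The crux is thus to prove the strict convexity $v_{yy}(y)>0$ for all $y>0$, and I would argue by contradiction. First, $v$ is a supremum of affine functions by \eqref{vV0}, hence convex, so $v_{yy}\geq 0$; and $v_{yy}$ is continuous because $v\in C^3$. Suppose $v_{yy}(y_0)=0$ at some $y_0>0$. Then $y_0$ is an interior global minimum of the $C^1$ function $v_{yy}$, so necessarily $v_{yyy}(y_0)=0$ as well. Differentiating the ODE \eqref{v} once (legitimate since $v\in C^3$ and $G\in C^{1,1}$, so the composition $y\mapsto G(-kv_y+l,y)$ is $C^1$) and evaluating at $y_0$ makes the second- and third-order terms vanish, leaving, via \eqref{Gy} together with $v_y(y_0)=-x(y_0)$ from \eqref{vy=},
\[ \max\bigl\{\,y_0^{1/(p-1)},\; -kv_y(y_0)+l\,\bigr\}=-r\,v_y(y_0)=r\,x(y_0). \]

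I would then split according to which term realizes the maximum. If the constraint term $-kv_y(y_0)+l=kx(y_0)+l$ attains it, the identity collapses to $(r-k)x(y_0)=l$, i.e. $x(y_0)=x_e$, contradicting $x(y)>x_e$ from \eqref{vy<-x0}. If instead $y_0^{1/(p-1)}$ strictly exceeds $-kv_y(y_0)+l$, then by continuity a whole neighborhood of $y_0$ lies in the non-binding regime, where \eqref{v} reduces to a non-degenerate linear second-order ODE with smooth coefficients; solving it for $v_{yy}$ and bootstrapping once promotes $v$ to $C^4$ near $y_0$. Differentiating that reduced ODE twice and inserting $v_{yy}(y_0)=v_{yyy}(y_0)=0$ gives $\frac{\mu^2}{2\sigma^2}y_0^2\,v_{yyyy}(y_0)=\frac{1}{p-1}\,y_0^{(2-p)/(p-1)}<0$, using $0<p<1$, whence $v_{yyyy}(y_0)<0$. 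This contradicts $(v_{yy})''(y_0)\geq 0$ at the interior minimum $y_0$. Both cases being impossible, $v_{yy}>0$ on all of $(0,+\infty)$, which closes the argument by the inverse function theorem as above.

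I expect the strict convexity step to be the main obstacle. The a.e. identity $v_{yy}=-1/V_{xx}$ from \eqref{vyy=}, combined with continuity of $v_{yy}$, does \emph{not} by itself rule out an isolated zero of $v_{yy}$ (such a zero would correspond to $V_{xx}=-\infty$, a vertical inflection of $V_x$ fully consistent with the concavity, monotonicity and $C^1$ bounds proved earlier). One therefore genuinely has to exploit the ODE structure and the sign condition $0<p<1$, rather than only the qualitative properties of $V$; in the non-binding case this forces the local bootstrap to $C^4$ so that the second-derivative test on $v_{yy}$ can be applied. Once $v_{yy}>0$ everywhere is secured, the regularity transfer is routine.
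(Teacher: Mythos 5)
Your proof is correct, and its core --- establishing the strict convexity $v_{yy}>0$ by contradiction on the differentiated dual ODE \eqref{vy}, with the same dichotomy at $y_0$ and the identical Case 1 (forcing $x(y_0)=x_e$, contradicting \eqref{vy<-x0}) --- is exactly the paper's argument. You deviate in two places, both legitimately. First, in the unconstrained regime you close the contradiction with a pointwise second-derivative test: after an explicit bootstrap to $C^4$ near $y_0$ you compute $\frac{\mu^2}{2\sigma^2}y_0^2 v_{yyyy}(y_0)=\frac{1}{p-1}y_0^{(2-p)/(p-1)}<0$, contradicting $(v_{yy})''(y_0)\geq 0$ at an interior minimum; the paper instead differentiates once more and invokes the strong maximum principle on the resulting inequality to get $v_{yy}>0$ throughout the neighborhood $\Omega$. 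Your version is more elementary (pure calculus, no maximum principle), and you are right that the bootstrap deserves to be made explicit --- the paper needs $v\in C^4(\Omega)$ too in order to write $v_{yyyy}$, but glosses over this. The maximum-principle route is the more robust one (it survives settings where a pointwise evaluation is unavailable, e.g.\ PDE analogues), but buys nothing extra here. Second, your transfer of regularity back to $V$ goes through the inverse function theorem applied to $x(\cdot)=-v_y(\cdot)=V_x^{-1}$, using \corref{Cor:bj} and \eqref{vy=}, so that $V_x=x^{-1}\in C^2$ once $x'=-v_{yy}<0$; the paper instead reconstructs $\varphi(x)=\inf_{y>0}(v(y)+xy)$, shows $\varphi_{xx}(x)\,v_{yy}(v_y^{-1}(-x))=-1$, and then identifies $\varphi=V$ via the verification theorem (\lemref{varthm}). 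Your route is shorter and avoids re-invoking verification, since it exploits that $v$ was defined as the dual of $V$ itself and that \eqref{vy=} already ties $v_y$ to $V_x^{-1}$; the paper's route has the side benefit of exhibiting $V$ as a classical solution of \eqref{HJB} in one stroke, which is what the optimal-strategy theorem in Section 7 then cites. Your closing remark --- that the a.e.\ identity \eqref{vyy=} alone cannot rule out an isolated zero of $v_{yy}$, so the ODE structure must be used --- correctly identifies the genuine difficulty.
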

\begin{proof}
In the previous section, we find a solution $v\in C^3((0, +\infty))$ to the problem \eqref{v1} which is convex, decreasing and satisfies \eqref{vinfity} and \eqref{vyinfity}.
Let 
\[\varphi(x)=\inf\limits_{y>0}(v(y)+xy), \quad y>0.\]
Then 
\[v(y)=\sup\limits_{x> x_e}(\varphi(x)-xy), \quad y>0.\] 
Similar to \eqref{vy=} and \eqref{vyy=}, we can show 
\[ \varphi_{xx}(x) v_{yy}(v_y^{-1}(-x))=-1, \quad x>x_e. \]
Note that $v\in C^3((0, +\infty))$, so $\varphi\in C^3((x_e, +\infty))$ is an immediate consequence of the following claim
\[ v_{yy}(y)>0, \quad y>0. \]
To show this, 
differentiating \eqref{v}, we have
\[ (r-\beta) yv_{yy}-\frac{\mu^2}{2\sigma^2}(2yv_{yy}+y^2v_{yyy})-G_u(-kv_y+l, y)(-kv_{yy})-G_y(-kv_y+l, y)+rv_y=0. \]
Substituting \eqref{Gu} and \eqref{Gy} we get
\begin{multline}\label{vy}
(r-\beta) yv_{yy}-\frac{\mu^2}{2\sigma^2}(2yv_{yy}+y^2v_{yyy})+(y-(-kv_y+l)^{p-1})^+(-kv_{yy})\\
+\max\{y^{\frac{1}{p-1}}, -kv_y+l\}+rv_y=0.
\end{multline}
By the definition \eqref{vV0}, $v(y)$ is convex, so $v_{yy}(y)\geqslant 0$ for all $y>0$. 
We next prove $v_{yy}(y)>0$ by contradiction. 
\par
Suppose 
$v_{yy}(y_0)=0$ for some $y_0>0$. Then $v_{yy}$ attains its minimum value 0 at $y_0$,
so $v_{yyy}(y_0)=0$. Substituting them into \eqref{vy}, we obtain
\begin{align}\label{vyy=0}
\max\{y_0^{\frac{1}{p-1}}, -kv_y(y_0)+l\}+rv_y(y_0)=0.
\end{align}
If $-kv_y(y_0)+l\geqslant y_0^{\frac{1}{p-1}}$, then \eqref{vyy=0} becomes
\[ -kv_y(y_0)+l+rv_y(y_0)=0,\]
which implies $v_y(y_0)=-x_e$, contradicting \eqref{vy<-x0}.
Otherwise, $y_0^{\frac{1}{p-1}}>-kv_y(y_0)+l$ so that there exists a neighborhood of $y_0$, denoted by $\Omega$, such that
$y^{\frac{1}{p-1}}>-kv_y(y)+l$ for $y\in\Omega$. By \eqref{vy},
\begin{align*}
(r-\beta) yv_{yy}-\frac{\mu^2}{2\sigma^2}(2yv_{yy}+y^2v_{yyy})+rv_y=-y^{\frac{1}{p-1}} \quad\hbox{in}\quad\Omega.
\end{align*}
Differentiating it, we have
\begin{align*}
-\frac{\mu^2}{2\sigma^2}y^2v_{yyyy}(y)+\bigg(r-\frac{2\mu^2}{\sigma^2}-\beta\bigg)yv_{yyy}(y)+2rv_{yy}(y)=\frac{1}{1-p}y^{\frac{2-p}{p-1}}+\bigg(\beta+\frac{\mu^2}{\sigma^2}\bigg)v_{yy}(y)>0
\end{align*}
in $\Omega$. By the strong maximum principle and together with $v_{yy}\geqslant 0$ on the boundary points of $\Omega$, we obtain $v_{yy}> 0$ in $\Omega$, contradicting $v_{yy}(y_0)=0$.

The above argument shows that $\varphi$ is a classical solution to the HJB equation \eqref{HJB} in the class of increasing, concave $C^3$ functions. Therefore $V=\varphi$ by Lemma \ref{varthm}. 
\end{proof}

%%%%%%%%%%%%%%%%%%%%%%%%%%%%%%%%%%

\section{Optimal strategy in non-homogeneous case}
\setcounter{equation}{0}

In this section we study the optimal consumption-investment strategy for the problem \eqref{V} in the non-homogeneous case $0<k<r$, $l>0$.

To this end, we divide the whole state space $(x_e,+\infty)$ into an unconstrained consumption region $\cal U$ and a constrained consumption region $\cal C$ as follows
\begin{align*}
{\cal U}&:=\{x>x_e: V_x(x)^{\frac{1}{p-1}}> kx+l\};\\
{\cal C} &:=\{x>x_e: V_x(x)^{\frac{1}{p-1}}\leqslant kx+l\}.
\end{align*}
One should consume optimally in ${\cal U}$, and consume as few as possible in ${\cal C}$. 
Precisely the optimal consumption-investment strategy for the problem \eqref{V} is stated as follows. 
\begin{theorem}[Optimal consumption-investment strategy]
The optimal investment-investment strategy for the problem \eqref{V} is given by the feedback law 
as
\begin{align*}
c^{*}(x)&=\begin{cases}
V_x(x)^{\frac{1}{p-1}},\quad x\in{\cal U};\\
k x+l,\quad x\in{\cal C},
\end{cases} \quad\pi^*(x)=-\frac{\mu}{\sigma^2}\frac{V_x(x)}{V_{xx}(x)}.
\end{align*}
\end{theorem}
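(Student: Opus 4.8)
The plan is to invoke the verification theorem \lemref{varthm} and simply read off the optimal controls as the pointwise maximizers of the Hamiltonian in \eqref{HJB}. By the preceding theorem, $V\in C^3((x_e,+\infty))$, and by \lemref{Lem:Vxx} we have $V_x>0$ and $V_{xx}<0$ throughout $(x_e,+\infty)$, so $V$ is a strictly concave classical solution of \eqref{HJB}. Hence every quantity appearing in the feedback law is well defined, and the second half of the verification argument (the part of the proof of \lemref{varthm-2} that turns the inequalities into equalities) is exactly the mechanism that will certify optimality.

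First I would compute the two maximizers separately. For the investment term, maximizing $\tfrac12\sigma^2\pi^2V_{xx}+\mu\pi V_x$ over $\pi\in\R$ gives, precisely as recorded in \eqref{HJB1}, $\pi^*(x)=-\tfrac{\mu}{\sigma^2}\tfrac{V_x(x)}{V_{xx}(x)}$, which is finite and strictly positive because $V_{xx}<0$. For the consumption term, the map $c\mapsto \tfrac{c^p}{p}-cV_x$ is strictly concave with unconstrained maximizer $V_x^{1/(p-1)}$ (from the first-order condition $c^{p-1}=V_x$). Restricting to $c\geq kx+l$ and projecting onto this half-line yields $c^*(x)=\max\{V_x(x)^{1/(p-1)},\,kx+l\}$, which equals $V_x^{1/(p-1)}$ on $\mathcal{U}$ and $kx+l$ on $\mathcal{C}$ by the very definitions of those two regions. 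This reproduces the stated feedback law.

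Next I would check that inserting $(c^*,\pi^*)$ into the wealth dynamics \eqref{X} produces an admissible strategy. Since $V\in C^3$, the coefficient $\pi^*$ is $C^1$ and $c^*$ is continuous (a maximum of two locally Lipschitz functions), so both are locally Lipschitz on compact subsets of $(x_e,+\infty)$; this yields a unique strong solution $X^*_t$ up to a possible exit time from $(x_e,+\infty)$. The state-dependent constraint $c_t\geq kX_t+l$ in \eqref{constraint2} holds automatically because $c^*(x)\geq kx+l$ by construction. The remaining tasks are to verify the integrability condition \eqref{constraint1} and to show that $X^*_t$ neither explodes nor reaches the boundary $x_e$ in finite time, so that $X^*_t\in(x_e,+\infty)$ for all $t\geq 0$. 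The behaviour near $x_e$ is delicate, since $V_x(x_e+)=+\infty$ by \lemref{Lem:Vx0} makes the feedback drift singular there; I would control this using the strict concavity of $V$ together with the supermartingale property of $\gamma_tX_t$.

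Finally I would close the optimality gap by rerunning \eqref{veriphi}--\eqref{veriphi2-1} along the closed-loop process $X^*_t$. Because $(c^*,\pi^*)$ are the pointwise maximizers, every inequality there becomes an equality, so $V(x)=\E\big[\int_0^\infty e^{-\beta s}U(c^*_s)\,ds\big]$ as soon as the transversality condition $\liminf_n\E\big(e^{-\beta\theta_n}V(X^*_{\theta_n})\big)=0$ holds. This is the exact analogue of \eqref{veriphi3}, and I would establish it by the same H\"older-inequality estimate used in the proof of \lemref{varthm-2}, invoking the growth bound $V(x)\leq C(x^p+1)$, the supermartingale property of $\gamma_tX_t$, and the standing hypothesis $\kappa>0$. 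The main obstacle is the admissibility step: proving non-explosion and, above all, that the optimal wealth stays strictly above $x_e$ despite the singular drift generated by $V_x\to+\infty$ at the boundary. Once that is secured, the verification machinery of \lemref{varthm} delivers optimality essentially for free.
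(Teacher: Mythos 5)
Your proposal is correct and follows exactly the paper's route: the paper's entire proof is the one-line remark that the result ``comes from the HJB equation \eqref{HJB} and Lemma \ref{varthm}'', i.e., read off the pointwise maximizers of the Hamiltonian (which you compute correctly, with $c^*$ the projection $\max\{V_x^{1/(p-1)}, kx+l\}$ matching the definitions of ${\cal U}$ and ${\cal C}$) and certify them by the verification theorem. Your write-up actually supplies more detail than the paper does---in particular, the admissibility, non-explosion, and boundary-behaviour issues near $x_e$ that you flag as the main obstacle are passed over silently in the paper's verification argument as well.
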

\begin{proof}
This comes from the HJB equation \eqref{HJB} and Lemma \ref{varthm}. 
\end{proof}

In the rest of the paper, we focus on the properties of the two regions ${\cal C}$ and ${\cal U}$.

Our first observation is that, intuitively speaking, one is expected to consume as few as possible when her financial situation is fairly bad. The following result confirms this fact.
\begin{lemma}[Optimal consumption strategy in bad financial situation]\label{Lem:x1}
There exists $x_1>x_e$ such that
\begin{align}\label{x1}
V_x(x)^{\frac{1}{p-1}}< kx+l, \quad x_e< x< x_1;
\end{align}
and thus $(x_e, x_1)\subset{\cal C}$.
\end{lemma}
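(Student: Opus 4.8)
The plan is to prove the strict inequality \eqref{x1} simply by comparing the limiting behaviour of the two sides as $x\to x_e+$, and then to extend it to a whole right-neighbourhood of $x_e$ by continuity. The key realization is that everything reduces to the boundary blow-up of $V_x$ already recorded in \lemref{Lem:Vx0}, combined with the sign of the exponent $\frac{1}{p-1}$.

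First I would assemble the two ingredients. On the one hand, \lemref{Lem:Vx0} gives $V_x(x_e+)=+\infty$. On the other hand, since $0<p<1$ we have $p-1<0$, so the exponent $\frac{1}{p-1}$ is strictly negative; hence raising the quantity $V_x(x)\to+\infty$ to this negative power drives it to zero, that is
\[
\lim_{x\to x_e+}V_x(x)^{\frac{1}{p-1}}=0.
\]
Meanwhile the right-hand side is elementary: $kx+l$ is continuous and $kx+l\to kx_e+l=c_e>0$ as $x\to x_e+$. Consequently the function $F(x):=V_x(x)^{\frac{1}{p-1}}-(kx+l)$, which is continuous on $(x_e,+\infty)$ because $V_x$ is continuous there, satisfies $\lim_{x\to x_e+}F(x)=-c_e<0$.

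By sign preservation of limits, there exists $x_1>x_e$ with $F(x)<0$ on $(x_e,x_1)$, i.e. $V_x(x)^{\frac{1}{p-1}}<kx+l$ there, which is exactly \eqref{x1}; since this strict inequality implies the defining (non-strict) inequality of $\cal C$, we obtain $(x_e,x_1)\subset{\cal C}$. I do not expect any genuine obstacle: the only substantive step is recognizing that the blow-up of the marginal value $V_x$ at the left boundary (the content of \lemref{Lem:Vx0}) forces the unconstrained optimal consumption level $V_x^{1/(p-1)}$ down to zero, while the mandatory floor $kx+l$ stays bounded away from zero near $c_e$. All the hard analytic work was already carried out in establishing $V_x(x_e+)=+\infty$, so what remains is purely a continuity argument. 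Economically, this just says that near the solvency boundary $x_e$ the investor is always pinned to the minimal consumption rate.
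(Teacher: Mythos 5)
Your proposal is correct and follows exactly the paper's argument: the paper likewise invokes \lemref{Lem:Vx0} to get $\lim_{x\to x_e+}V_x(x)^{\frac{1}{p-1}}=0$ (since $\frac{1}{p-1}<0$) and concludes from $kx+l\geq l>0$ together with continuity. Your write-up merely spells out the continuity/sign-preservation step that the paper leaves implicit.
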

\begin{proof}
By Lemma \ref{Lem:Vx0}, $\lim\limits_{x\rightarrow x_e+}V_x(x)^{\frac{1}{p-1}}=0$. So the claim follows as $l>0$.
\end{proof}

When one's financial situation is fairly good, we will show that the discount factor $\beta$ plays a critical role in determining the optimal consumption strategy. 
If the discount factor is small, then the investor cares more about long-term than short-term consumption. As a result, she should consume as few as possible and save the reminder for future investment. This consumption habit is reversed when the discount factor is large, she should consume optimally when her wealth is above a threshold, and consume as few as possible when her wealth is below the threshold.

\subsection{Optimal consumption with small discount factor}

Recall that
\[ \kappa=\frac{\beta-p\left(\frac{\mu^2}{2\sigma^2(1-p)}+r\right)}{1-p}\]
is strictly increasing in $\beta$, so its value reflects whether the discount factor is large or small.

In this part, we study the case with a small discount factor. 

The following result states that if one is more concerned with long-term than short-term consumption, then she should, regardless of her financial condition, always consume as few as possible. 
\begin{theorem}
If $0<\kappa<k$, then ${\cal C}=(x_e, +\infty)$ and ${\cal U}=\emptyset$.
\end{theorem}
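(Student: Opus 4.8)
The plan is to show that the constrained region $\mathcal{C}$ exhausts the entire state space $(x_e,+\infty)$ by proving that the strict inequality $V_x(x)^{\frac{1}{p-1}} < kx+l$ holds everywhere, not merely near $x_e$ as established in Lemma~\ref{Lem:x1}. Equivalently, working in the dual variable, I would show that $v_{yy}(y)$ never vanishes forces the inequality $y^{\frac{1}{p-1}} < -kv_y(y)+l$ to fail globally; so it is cleaner to argue directly on the primal side and use the upper bound on $V_x$ already available from Lemma~\ref{Lem:Vx_up}. The key structural input is the hypothesis $\kappa < k$, which says the forced consumption floor $kx+l$ grows faster (in the relevant scaling) than Merton's unconstrained optimal rate would prescribe, so the constraint should bind at every wealth level.

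First I would suppose, for contradiction, that $\mathcal{U}\neq\emptyset$, i.e. there is some $x_0>x_e$ with $V_x(x_0)^{\frac{1}{p-1}} > kx_0+l$. By Lemma~\ref{Lem:x1} and continuity of $V_x$ (hence of $x\mapsto V_x(x)^{\frac{1}{p-1}}-(kx+l)$), there is a smallest such crossing point; at that boundary point $\hat x$ between $\mathcal{C}$ and $\mathcal{U}$ one has the matching condition $V_x(\hat x)^{\frac{1}{p-1}} = k\hat x + l$, so $V_x(\hat x) = (k\hat x+l)^{p-1} = c_e$-type boundary data. On $\mathcal{U}$ the HJB equation \eqref{HJB1} takes its unconstrained form, where the supremum is attained at the interior optimizer $c = V_x^{\frac{1}{p-1}}$, giving
\[
\beta V(x) + \frac{\mu^2}{2\sigma^2}\frac{V_x^2(x)}{V_{xx}(x)} - \frac{1-p}{p}V_x(x)^{\frac{p}{p-1}} - rxV_x(x) = 0.
\]
I would then exploit the upper bound $V(x)\le V_k(x)$ from \eqref{V_up} together with $V_k(x)=\frac{1}{p}\kappa^{p-1}x^p$ when $k\le\kappa$, but here $k>\kappa$, so $V_k(x)=\frac{k^p}{p(\kappa(1-p)+kp)}x^p$; the point is to compare the candidate solution on $\mathcal{U}$ against a $\kappa$-driven Merton profile and show that $\kappa<k$ makes the unconstrained regime inconsistent with the forced growth rate of the floor.

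The cleanest route is probably the dual ODE \eqref{vy}. On any interval where the $\max$ is realized by the $y^{\frac{1}{p-1}}$-branch (which is exactly the dual image of $\mathcal{U}$, since $y^{\frac{1}{p-1}}>-kv_y+l$ there), the $(y-(-kv_y+l)^{p-1})^+$ term vanishes and \eqref{vy} reduces to the linear-in-$v$ equation
\[
(r-\beta)yv_{yy} - \frac{\mu^2}{2\sigma^2}\bigl(2yv_{yy}+y^2 v_{yyy}\bigr) + rv_y = -y^{\frac{1}{p-1}}.
\]
I would analyze this constant-coefficient (after the change $y=e^s$) equation: its homogeneous solutions are power functions $y^{m}$ with exponents determined by an indicial equation whose coefficients encode precisely $\kappa$, while the particular solution scales like $y^{\frac{1}{p-1}}$. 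The condition $\kappa<k$ should translate into a sign/ordering statement showing that no solution branch can satisfy simultaneously the boundary behavior $v_y\to -x_e$ from \eqref{vyinfity}, the global convexity $v_{yy}>0$, and the defining inequality $y^{\frac{1}{p-1}}>-kv_y+l$ of the region.

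The main obstacle I anticipate is controlling the behavior at the free boundary between $\mathcal{C}$ and $\mathcal{U}$ and ruling out that $\mathcal{U}$ appears as an interval bounded away from both $x_e$ and $+\infty$. The large-$x$ (equivalently small-$y$) asymptotics are delicate because the dual ODE degenerates at $y=0$, and I expect the decisive computation to be an explicit comparison of the growth exponent of $V$ (forced by $\kappa$) against the exponent $\frac{p}{p-1}$ appearing in the unconstrained HJB term — this is where the strict inequality $\kappa<k$ must be converted into a contradiction. Once that sign computation is pinned down, combining it with the established facts that $\mathcal{C}$ is nonempty near $x_e$ (Lemma~\ref{Lem:x1}) and that $V_x$ is a strictly decreasing bijection onto $(0,\infty)$ (Corollary~\ref{Cor:bj}) should close the argument and yield $\mathcal{C}=(x_e,+\infty)$.
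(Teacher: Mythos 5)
Your overall skeleton --- argue by contradiction, pass to the dual variable, and observe that on the dual image of $\mathcal{U}$ the obstacle terms in \eqref{vy} drop out so that $v_y$ solves the linear Euler-type equation $\mathcal{L}(v_y)=-y^{\frac{1}{p-1}}$, where $\mathcal{L}(u)=-\frac{\mu^2}{2\sigma^2}y^2u_{yy}+\big(r-\beta-\frac{\mu^2}{\sigma^2}\big)yu_y+ru$ --- is the same as the paper's. But the two steps you explicitly defer (``should translate into a sign/ordering statement'', ``I expect the decisive computation\dots'') are precisely the content of the proof, and neither is supplied; as written, the hypothesis $0<\kappa<k$ is never actually used. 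First, you do not rule out that $\mathcal{U}$ contains a neighborhood of $+\infty$, which is where $\kappa<k$ enters the first time. The paper does this by writing $V_k(x)=\frac{1}{p}h^{p-1}x^p$ with $h=k^{\frac{p}{p-1}}(\kappa(1-p)+kp)^{-\frac{1}{p-1}}$, noting that $\kappa<k$ gives $h<k$, sandwiching $\frac{1}{p}h^{p-1}(x-x_e)^p\leqslant V(x)\leqslant \frac{1}{p}h^{p-1}x^p$ via \eqref{V_up} and \eqref{V_downup}, and using concavity to conclude $\liminf_{x\rightarrow+\infty}V_x(x)/x^{p-1}\geqslant h^{p-1}$, hence $\limsup_{x\rightarrow+\infty}V_x(x)^{\frac{1}{p-1}}/(kx+l)\leqslant h/k<1$. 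Without this estimate, your ``smallest crossing point'' construction leaves open the possibility that $\mathcal{U}$ is an unbounded terminal interval, and the dual analysis would then have to be carried out on a region degenerating at $y=0$ --- exactly the case you flag as an obstacle but do not resolve.

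Second, for a component of the dual image of $\mathcal{U}$ that is a bounded interval $D$ with $v_y=-\frac{1}{k}y^{\frac{1}{p-1}}+\frac{l}{k}$ on $\partial D$, you propose to solve the linear ODE explicitly (indicial roots plus particular solution) and derive a contradiction from convexity and boundary behavior. That is not a routine sign check: a linear two-point boundary-value problem generically admits a solution matching any prescribed boundary data, so no contradiction can come from matching alone; one needs a reason why the specific solution $v_y$ must lie \emph{below} the barrier $w(y)=-\frac{1}{k}y^{\frac{1}{p-1}}+\frac{l}{k}$ throughout $D$. The paper's reason is a one-line barrier computation plus the comparison principle: $\mathcal{L}(w)=-\frac{\kappa}{k}y^{\frac{1}{p-1}}+\frac{rl}{k}\geqslant -y^{\frac{1}{p-1}}=\mathcal{L}(v_y)$ in $D$ (this is the second and decisive use of $\kappa<k$, together with $rl/k>0$), and since $w=v_y$ on $\partial D$ and the zeroth-order coefficient $r$ of $\mathcal{L}$ is nonnegative, the maximum principle forces $v_y\leqslant w$ in $D$, contradicting the definition of $D$. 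Your proposal contains neither the barrier $w$ nor this computation, so the argument does not close. (A small slip besides: early on you say you would show that ``$y^{\frac{1}{p-1}}<-kv_y(y)+l$ fails globally''; the goal is the opposite --- that this inequality, i.e.\ membership in $\mathcal{C}$, holds globally.)
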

\begin{proof}
We first show that there exists $x_3>x_e$ such that 
\begin{align}\label{x3}
V_x(x)^{\frac{1}{p-1}}< kx+l, \quad x> x_3,
\end{align}
which indicates that ${\cal U}$ is a bounded set. 
In fact, by \eqref{Vk}, we have $V_k(x)=\frac{k^p}{p(\kappa(1-p)+kp)}x^p=\frac{1}{p}h^{p-1}x^p$, where
\[ h=\frac{k^\frac{p}{p-1}}{(\kappa(1-p)+kp)^\frac{1}{p-1}}<k. \]
By \eqref{V_up} and \eqref{V_downup}, we have
\begin{align}\label{V_bound}
\frac{1}{p}h^{p-1}(x-x_e)^p\leqslant V(x) \leqslant \frac{1}{p}h^{p-1}x^p.
\end{align}
The concavity of $V(x)$ implies 
\[ V_x(x)\geqslant \frac{V(x+y)-V(x)}{y} , \quad y>0. \]
Together with \eqref{V_bound}, we infer that 
\[ V_x(x)\geqslant \frac{h^{p-1}[(x+y-x_e)^p-x^p]}{py} , \quad y>0. \]
Setting $y=\varepsilon x$ in above, it follows 
\[ \frac{V_x(x)}{x^{p-1}}\geqslant \frac{h^{p-1}[(1+\varepsilon-\frac{x_e}{x})^p-1]}{p\varepsilon}, \]
so 
\[ \liminf\limits_{x\rightarrow+\infty}\frac{V_x(x)}{x^{p-1}}\geqslant \frac{h^{p-1}[(1+\varepsilon)^p-1]}{p\varepsilon}\rightarrow h^{p-1}, \quad \text{ as}\quad \varepsilon \rightarrow0. \]
Hence,
\[ \limsup\limits_{x\rightarrow+\infty}\frac{V_x(x)^{\frac{1}{p-1}}}{kx+l} =\frac{1}{k}\bigg(\liminf\limits_{x\rightarrow+\infty}\frac{V_x(x)}{x^{p-1}}\bigg)^{\frac{1}{p-1}} \leqslant \frac{h}{k}<1. \]
This confirms \eqref{x3}. 

To prove ${\cal C}=(x_e, +\infty)$, it is sufficient to prove 
\begin{align}\label{kx+l}
V_x(x)^{\frac{1}{p-1}}\leqslant kx+l, \quad x>x_e.
\end{align} 
From the dual-problem viewpoint, we need only prove
\begin{align}\label{-kv_y+l>}
v_y(y)\leqslant-\frac{1}{k}y^{\frac{1}{p-1}}+\frac{l}{k}, \quad y>0.
\end{align}
By Lemma \ref{Lem:x1} and \eqref{x3}, for extremely small and extremely large $y$, the above holds true. 

Per absurdum, suppose there exists a bounded open interval $D\subseteq (0, +\infty)$ such that
\begin{align}\label{-kv_y+l<}
\left\{
\begin{array}{l}
v_y(y)>-\frac{1}{k}y^{\frac{1}{p-1}}+\frac{l}{k}, \quad y\in D, \\
v_y(y)=-\frac{1}{k}y^{\frac{1}{p-1}}+\frac{l}{k}, \quad y\in \partial D.
\end{array}
\right.
\end{align}
Then $D$ is bounded away from zero. Owing to \eqref{vy}, we have
\[ (r-\beta) yv_{yy}-\frac{\mu^2}{2\sigma^2}(2yv_{yy}+y^2v_{yyy})+y^{\frac{1}{p-1}}+rv_y=0, \quad y\in D. \]
Denote
\begin{align}\label{operatorL}
{\cal L}(u)=-\frac{\mu^2}{2\sigma^2}y^2u_{yy}+\bigg(r-\beta-\frac{\mu^2}{\sigma^2}\bigg) yu_{y}+ru,
\end{align}
then
${\cal L}(v_y)=-y^{\frac{1}{p-1}}$ in $D$.
\par
On the other hand, in $D$,
\begin{align}
&\hspace{1.3em}{\cal L}\bigg(-\frac{1}{k}y^{\frac{1}{p-1}}+\frac{l}{k}\bigg)\nonumber\\
&= -\frac{\mu^2}{2\sigma^2}\bigg(-\frac{1}{k}\frac{1}{p-1}\bigg(\frac{1}{p-1}-1\bigg)\bigg)y^{\frac{1}{p-1}}+\bigg(r-\beta-\frac{\mu^2}{\sigma^2}\bigg)\bigg(-\frac{1}{k}\frac{1}{p-1}\bigg)y^{\frac{1}{p-1}}\nonumber\\
&\hspace{1.3em}+r\bigg(-\frac{1}{k}y^{\frac{1}{p-1}}+\frac{l}{k}\bigg)\nonumber\\
&= \bigg[\frac{\mu^2}{2\sigma^2}\bigg(\frac{1}{1-p}+1\bigg)+\bigg(r-\beta-\frac{\mu^2}{\sigma^2}\bigg)-r(1-p)\bigg]\frac{y^{\frac{1}{p-1}}}{k(1-p)}+\frac{rl}{k}\nonumber\\
&= \bigg[p\bigg(\frac{\mu^2}{2\sigma^2(1-p)}+r\bigg)-\beta\bigg]\frac{y^{\frac{1}{p-1}}}{k(1-p)}+\frac{rl}{k}\nonumber\\\label{Lu} &= \frac{-\kappa}{k}y^{\frac{1}{p-1}}+\frac{rl}{k}. 
\end{align}
Recall $0<\kappa<k$, so we have
\begin{align*}
{\cal L}\bigg(-\frac{1}{k}y^{\frac{1}{p-1}}+\frac{l}{k}\bigg)\geqslant -y^{\frac{1}{p-1}}+\frac{rl}{k}\geqslant-y^{\frac{1}{p-1}}={\cal L}(v_y).
\end{align*}
Using the comparison principle for ODE, we obtain $v_y(y)\leqslant-\frac{1}{k}y^{\frac{1}{p-1}}+\frac{l}{k}$ in
$D$, contradicting \eqref{-kv_y+l<}.
\end{proof}

\subsection{Optimal consumption with large discount factor}
When the discount factor is fairly large, the optimal consumption strategy is no more always consuming at the minimal rate. In fact, the strategy is state-dependent. We will show that there exists a critical threshold such that the investor should consume optimally when her wealth is above the threshold, and consume as few as possible below it. The result is stated as follows.
\begin{theorem}
If $\kappa \geqslant r$, then ${\cal C}=(x_e,x^*]$ and ${\cal U}=(x^*,+\infty)$ for some $$x^*\in \left(x_e, \quad\frac{x_e+(1-p)(\frac{k}{\kappa})^{-p}\frac{l}{\kappa}}{1-(\frac{k}{\kappa})^{1-p}}\right).$$
\end{theorem}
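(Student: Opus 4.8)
The plan is to carry out everything in the dual picture, where \eqref{vy} is available. Writing $b(y):=-\tfrac1k y^{\frac{1}{p-1}}+\tfrac lk$, the relations $x=-v_y(y)$ and $V_x(x)=y$ turn the defining inequality of ${\cal U}$ into $v_y(y)>b(y)$, so that ${\cal C}=(x_e,x^*]$ and ${\cal U}=(x^*,\infty)$ is equivalent, after the order-reversing change $x\mapsto y=V_x(x)$, to the single statement $\{y>0:\,v_y(y)>b(y)\}=(0,y^*)$ with $y^*=V_x(x^*)$. First I would record the two boundary behaviours. Near $x_e$, i.e. $y\to+\infty$, \lemref{Lem:x1} already gives $v_y<b$, so we are in ${\cal C}$. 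Near $x=+\infty$, i.e. $y\to0+$, I would invoke \lemref{Lem:Vx_up}: since $\kappa\geq r>k$ forces $\max\{\kappa,k\}=\kappa$, the bound $V_x(x)\leq \kappa^{p-1}x^p/(x-x_e)$ gives $\liminf_{x\to\infty} V_x(x)^{\frac{1}{p-1}}/(kx+l)\geq \kappa/k>1$, hence $v_y>b$ for all small $y$. In particular ${\cal U}\neq\emptyset$ and $y^*\in(0,\infty)$ is well defined.

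The heart of the matter, and the step I expect to be the hardest, is uniqueness of the free boundary: that $\{v_y>b\}$ is the single interval $(0,y^*)$ and not a union of several components. Set $\psi=v_y-b$ and $\hat y:=\big(\tfrac{rl}{\kappa-k}\big)^{p-1}$, the unique $y$ at which $(\tfrac{\kappa}{k}-1)y^{\frac{1}{p-1}}=\tfrac{rl}{k}$. On ${\cal U}$, exactly as in the proof of the small-discount-factor theorem, \eqref{vy} collapses to the linear relation ${\cal L}(v_y)=-y^{\frac{1}{p-1}}$ (here ${\cal L}$ is the operator \eqref{operatorL}), while \eqref{Lu} gives ${\cal L}(b)=-\tfrac{\kappa}{k}y^{\frac{1}{p-1}}+\tfrac{rl}{k}$; hence ${\cal L}(\psi)=(\tfrac{\kappa}{k}-1)y^{\frac{1}{p-1}}-\tfrac{rl}{k}$ on ${\cal U}$, which is positive precisely for $y<\hat y$. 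Evaluating ${\cal L}$ at an interior positive maximum $y_\circ$ of $\psi$ (where $\psi'=0$, $\psi''\leq0$) yields ${\cal L}(\psi)(y_\circ)\geq r\psi(y_\circ)>0$, so $y_\circ<\hat y$: every positive local maximum of $\psi$ lies left of $\hat y$. If ${\cal U}$ were disconnected there would be a further component $(a,b_0)$ with $v_y=b$ at both endpoints and an interior maximum $y_\circ<\hat y$, forcing $a<\hat y$, so the ${\cal C}$-component $(y^*,a)$ separating it from $(0,y^*)$ would sit inside $(0,\hat y)$. On such a component $v_y$ solves the quasilinear equation coming from \eqref{vy}, \eqref{Gu}, \eqref{Gy} in the constrained regime, and a computation based on the identity $-kb+l=y^{\frac{1}{p-1}}$ and \eqref{Lu} shows that $b$ is a strict subsolution there, with defect $(1-\tfrac{\kappa}{k})y^{\frac{1}{p-1}}+\tfrac{rl}{k}<0$ throughout $(0,\hat y)$. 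The comparison principle for this operator then gives $v_y\geq b$ on $(y^*,a)$, contradicting $v_y<b$ there; hence ${\cal U}=(0,y^*)$, i.e. ${\cal C}=(x_e,x^*]$ and ${\cal U}=(x^*,\infty)$.

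Making that quasilinear comparison rigorous is the principal obstacle. I would subtract the two equations for $b$ and $v_y$, write the difference of their first-order parts as $k\big[(-kb+l)^{p-1}-(-kv_y+l)^{p-1}\big]v_{yy}$, and linearise it by the mean value theorem; using $v_{yy}>0$ (from the $C^3$ regularity already established) and $r-k>0$, the resulting linear operator for $b-v_y$ has zeroth-order coefficient bounded below by $r-k>0$, so the maximum principle applies on intervals bounded away from $0$. The delicate feature is exactly the sign reversal of $y\mapsto{\cal L}(b)(y)+y^{\frac{1}{p-1}}$ across $\hat y$, which is why the localisation of positive maxima to $\{y<\hat y\}$ is indispensable: it guarantees that any spurious constrained component to be eliminated actually lies in the region where $b$ is a subsolution.

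Finally I would pin down the bounds on $x^*$. The lower bound $x^*>x_e$ is immediate from \lemref{Lem:x1}, since $(x_e,x_1)\subset{\cal C}$ gives $x^*\geq x_1>x_e$. For the upper bound, at the free boundary $V_x(x^*)=(kx^*+l)^{p-1}$ by definition of ${\cal U}$; combining with $V_x(x^*)\leq \kappa^{p-1}(x^*)^p/(x^*-x_e)$ from \lemref{Lem:Vx_up} gives $x^*-x_e\leq \kappa^{p-1}(x^*)^p(kx^*+l)^{1-p}$. Bounding $(kx^*+l)^{1-p}\leq (kx^*)^{1-p}+(1-p)(kx^*)^{-p}l$ by concavity of $t\mapsto t^{1-p}$ and simplifying $\kappa^{p-1}k^{1-p}=(\tfrac{k}{\kappa})^{1-p}$ and $\kappa^{p-1}(1-p)k^{-p}l=(1-p)(\tfrac{k}{\kappa})^{-p}\tfrac{l}{\kappa}$ turns this into $x^*\big(1-(\tfrac{k}{\kappa})^{1-p}\big)\leq x_e+(1-p)(\tfrac{k}{\kappa})^{-p}\tfrac{l}{\kappa}$, which is exactly $x^*\leq\bar x$; strictness of the concavity estimate (because $l>0$) upgrades this to the strict bound stated in the theorem.
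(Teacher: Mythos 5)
Your strategy---working in the dual variable, localizing every positive interior maximum of $\psi=v_y-b$ (where $b(y)=-\tfrac1k y^{\frac{1}{p-1}}+\tfrac lk$) to $(0,\hat y)$, then excluding constrained components inside $(0,\hat y)$ by a linearized comparison with the strict subsolution $b$---is genuinely different from the paper's argument, and its computations are correct: the identity ${\cal L}(\psi)=(\tfrac{\kappa}{k}-1)y^{\frac{1}{p-1}}-\tfrac{rl}{k}$ on the unconstrained set, the defect $(1-\tfrac{\kappa}{k})y^{\frac{1}{p-1}}+\tfrac{rl}{k}<0$ on $(0,\hat y)$, the positivity $r-k>0$ of the linearized zeroth-order coefficient (using $v_{yy}>0$ and $1-p>0$), and both bounds on $x^*$ all check out. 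The genuine gap is in the step ``disconnectedness produces a ${\cal C}$-component $(y^*,a)$ \dots contradicting $v_y<b$ there.'' Disconnectedness of $\{\psi>0\}$ does not by itself yield a nonempty separating interval on which $\psi\leq 0$ with $\psi<0$ somewhere: (i) two components of $\{\psi>0\}$ could a priori touch at an isolated zero of $\psi$, in which case $(y^*,a)$ is empty and there is nothing to compare; (ii) the component $(a,b_0)$ you pick need not be adjacent to $(0,y^*)$---if components accumulate, a ``next'' component need not exist---so $\psi\leq0$ on $(y^*,a)$ is unjustified; (iii) even on a genuine separating interval, $\psi$ could vanish identically, and then ``contradicting $v_y<b$ there'' has no content. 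All three cases are repairable with your own tools: at an isolated interior zero $y_0$, which is a local minimum of $\psi$, the unconstrained formula for ${\cal L}(\psi)$ holds in a full neighbourhood, so ${\cal L}(\psi)(y_0)=-\tfrac{\mu^2}{2\sigma^2}y_0^2\psi''(y_0)\leq 0$ forces $y_0\geq\hat y$, contradicting your localization of the maxima of the component to the right of $y_0$; and $\psi\equiv0$ on an interval is impossible because equating the two resulting expressions for ${\cal L}(v_y)$ pins $y$ to the single point $\hat y$. But as written the proof is incomplete.

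It is worth seeing how the paper sidesteps this bookkeeping entirely. It defines $x^*=\inf\{x>x_e:\ V_x(x)^{\frac{1}{p-1}}\geq kx+l\}$ and kills any constrained interval to the right of $x^*$ wherever it sits: on an interval where $v_y\leq b$ with equality at the endpoints, the quasilinear term $(y-(-kv_y+l)^{p-1})^+kv_{yy}$ is not linearized but simply \emph{discarded by its sign} (using $v_{yy}>0$), giving ${\cal L}(v_y)\geq kv_y-l$; combining this with ${\cal L}(b)\leq rb$ (this is exactly where $\kappa\geq r$ enters, via \eqref{Lu}) and with $-v_y>x_e=\tfrac{l}{r-k}$ yields ${\cal F}(v_y-b)\geq (r-k)(-v_y)-l>0$ for ${\cal F}={\cal L}-r$, an operator with \emph{no} zeroth-order term, so the strong maximum principle forces $v_y>b$ inside---a contradiction valid on any such interval, with no need for $\hat y$, no mean-value linearization, and no case analysis on components. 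Your localization lemma is a nice observation, but the paper's global comparison with ${\cal F}$ makes it, and the delicate topology it entails, unnecessary.
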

\begin{proof}
Because $\kappa\geq r>k$, owing to \eqref{Vx_up},
\[
\liminf\limits_{x\rightarrow+\infty}\frac{V_x(x)^{\frac{1}{p-1}}}{kx+l} \geqslant
\liminf\limits_{x\rightarrow+\infty}\frac{\left(\frac{\kappa^{p-1}x^p}{x-x_e}\right)^{\frac{1}{p-1}}}{kx+l}
=\frac{\kappa}{k}>1.
\]
So there exists $x_2\in (x_e, +\infty)$ such that
\begin{align}\label{x2}
V_x(x)^{\frac{1}{p-1}}> kx+l, \quad x> x_2, 
\end{align} 
which implies ${\cal C}$ is a bounded set. 
Define \[x^*=\inf\{x>x_e\mid V_x(x)^{\frac{1}{p-1}}\geqslant kx+l\}.\] 
We get $x^*\in (x_e, +\infty)$ from Lemma \ref{Lem:x1} and \eqref{x2}.
We now show 
\begin{align}\label{geq}
V_x(x)^{\frac{1}{p-1}}&\geqslant kx+l, \quad x>x^*.
\end{align}
Suppose this is not the case. 
Because ${\cal C}$ is a bounded set and $V_x$ is continuous, 
there exist $a$, $b$ such that $x_{e}<a<b<\infty$ and 
\begin{gather}\label{Vxab1}
V_x(x)^{\frac{1}{p-1}}\leqslant kx+l, \quad x\in (a, b), \\\label{Vxab2}
V_x(a)^{\frac{1}{p-1}}=ka+l, \quad V_x(b)^{\frac{1}{p-1}}=kb+l.
\end{gather}
By the dual relationship, we have
\begin{gather}\label{vyab1}
v_y(y)\leqslant-\frac{1}{k}y^{\frac{1}{p-1}}+\frac{l}{k}, \quad y\in
(y_b, y_a), \\\label{vyab2}
v_y(y_a)=-\frac{1}{k}y_a^{\frac{1}{p-1}}+\frac{l}{k}, \quad v_y(y_b)=-\frac{1}{k}y_b^{\frac{1}{p-1}}+\frac{l}{k}, \nonumber
\end{gather}
where $y_a=V_x(a)$, $y_b=V_x(b)$. By \eqref{vy}, when $y\in (y_b, y_a)$,
\[ (r-\beta) yv_{yy}-\frac{\mu^2}{2\sigma^2}(2yv_{yy}+y^2v_{yyy})+(y-(-kv_y+l)^{p-1})^+(-kv_{yy})-kv_y+l+rv_y=0. \]
Note that $v_{yy}>0$, thus
\begin{align}\label{Lvy}
{\cal L}(v_y)-kv_y+l=(y-(-kv_y+l)^{p-1})^+(kv_{yy})\geqslant 0, 
\end{align}
where ${\cal L}(u)$ is defined by \eqref{operatorL}.
On the other hand, by \eqref{Lu} and $\kappa\geqslant r$, we have
\begin{align}\label{Lw}
{\cal L}\bigg(-\frac{1}{k}y^{\frac{1}{p-1}}+\frac{l}{k}\bigg) = \frac{-\kappa}{k}y^{\frac{1}{p-1}}+\frac{rl}{k}
\leqslant r\bigg(-\frac{1}{k}y^{\frac{1}{p-1}}+\frac{l}{k}\bigg).
\end{align}
Denote
\begin{align*}
{\cal F}(u(y))={\cal L}(u(y))-ru(y)=-\frac{\mu^2}{2\sigma^2}y^2u_{yy}+\bigg(r-\beta-\frac{\mu^2}{\sigma^2}\bigg) yu_{y},
\end{align*}
then, by \eqref{Lvy}, \eqref{Lw} and \eqref{vy<-x0},
\[ {\cal F}\bigg(v_y(y)-\big(-\tfrac{1}{k}y^{\frac{1}{p-1}}+\tfrac{l}{k}\big)\bigg)\geqslant (r-k)(-v_y(y))-l>(r-k)x_e-l=0, \quad y\in(y_b, y_a).\]
By the strong maximum principle, we have
$v_y(y)-\bigg(-\frac{1}{k}y^{\frac{1}{p-1}}+\frac{l}{k}\bigg)>0$ in $(y_b, y_a)$, contradicting \eqref{vyab1}.
Hence \eqref{geq} is proved. 

We now show the inequality in \eqref{geq} is strict.
Denote $y^*=v_y^{-1}(-x^*)=V_x(x^*)$.
It suffices to prove
\begin{align}\label{y^*}
v_y(y)>-\frac{1}{k}y^{\frac{1}{p-1}}+\frac{l}{k}, \quad y\in (0, y^*).
\end{align}
By \eqref{vy}, \eqref{operatorL} and \eqref{geq}, we know
\begin{align*}
{\cal L}(v_y)
&= (r-\beta-\frac{\mu^2}{\sigma^2}) yv_{yy}-\frac{\mu^2}{2\sigma^2}y^2v_{yyy}+rv_y\\
&= (y-(-kv_y+l)^{p-1})^+kv_{yy}-\max\{y^{\frac{1}{p-1}}, -kv_y+l\}\\
&\geqslant -\max\{y^{\frac{1}{p-1}}, -kv_y+l\}\\
&= -y^{\frac{1}{p-1}}, \quad y\in (0, y^*).
\end{align*}
Together with \eqref{Lw}, \eqref{vy<-x0} and \eqref{geq}, we have
\begin{align*}
{\cal L}\bigg(v_y(y)-\bigg(-\frac{1}{k}y^{\frac{1}{p-1}}+\frac{l}{k}\bigg)\bigg)
&\geqslant
\frac{r-k}{k}y^{\frac{1}{p-1}}-\frac{rl}{k}\\
&\geqslant \frac{r-k}{k}(-kv_y+l)-\frac{rl}{k}\\
&= (r-k)(-v_y)-l\\
&> (r-k)x_e-l=0, \quad y\in (0, y^*).
\end{align*}
Using the strong comparison principle, we deduce \eqref{y^*}.

It is only left to prove the upper bound for $x^{*}$. 
Since $x^*\in {\cal C}$, we have $kx^*+l\geqslant V_x^{\frac{1}{p-1}}(x^*)$. Together with \eqref{Vx_up}, we get
\[ (kx^*+l)^{p-1}\leqslant V_x(x^*)\leqslant \frac{\kappa^{p-1}(x^*)^p}{x^*-x_e}, \]
thus,
\[ \left(\frac{k}{\kappa}+\frac{l}{\kappa x^*}\right)^{1-p}\geqslant 1-\frac{x_e}{x^*}. \]
For any $a,t>0$, by the mean-value theorem, there exists $b\in(0,t)$ such that 
\[ (a+t)^{1-p}-a^{1-p}=(1-p)(a+b)^{-p}t<(1-p)a^{-p}t.\] 
Hence, 
\[ (1-p)\left(\frac{k}{\kappa}\right)^{-p}\frac{l}{\kappa x^*} >\left(\frac{k}{\kappa}+\frac{l}{\kappa x^*}\right)^{1-p}-\left(\frac{k}{\kappa}\right)^{1-p}\geqslant 1-\frac{x_e}{x^*}-\left(\frac{k}{\kappa}\right)^{1-p}, \]
yielding 
\[ x^*<\frac{x_e+(1-p)(\frac{k}{\kappa})^{-p}\frac{l}{\kappa}}{1-(\frac{k}{\kappa})^{1-p}}. \]
This completes the proof.
\end{proof}

%%%%%%%%%%%%%%%%%%%%%%%%%%%%%%%%%%

\section{Concluding remarks}
\setcounter{equation}{0}
In the most interesting case $k,l>0$, we proved that if $0<\kappa< k$, then ${\cal C}=(x_e, +\infty)$ and ${\cal U}=\emptyset$, i.e., the optimal consumption rate is always the lower bound; if $\kappa> k$, both ${\cal C}$ and ${\cal U}$ are not empty; furthermore, they are both intervals specifically when $\kappa\geqslant r$. However, in the scenario $0<k\leq \kappa<r$, whether they are connected regions remains unknown.

%%%%%%%%%%%%%%%%%%%%%%%%%%%%%%%%%%%%%%%%%%
\begin{appendices}

%%%%%%%%%%%%%%%%%%%%%%%%%%%%%%%%%%%%%%%%%%
\section{The case with state-independent constraint $k=0$, $l>0$}
\setcounter{equation}{0}

Let us consider the scenario of $\kappa>k=0$, $l>0$. Corresponding to the equation \eqref{v1}, we have
\begin{align*}
\beta(v(y)-yv_y(y))-\frac{\mu^2}{2\sigma^2}y^2v_{yy}(y)-\sup\limits_{c\geqslant l}\bigg(\frac{c^p}{p}-cy\bigg)
+ryv_y(y)=0, \quad y>0,
\end{align*}
i.e.,
\begin{align}\label{v2}
-\frac{\mu^2}{2\sigma^2}y^2v_{yy}(y)+(r-\beta)yv_y(y)+\beta v(y)=
\left\{
\begin{array}{ll}
\frac{1-p}{p}y^{\frac{p}{p-1}}, & 0<y\leqslant l^{p-1}; \\ [3mm]
\frac{l^p}{p}-ly, & y\geqslant l^{p-1}.
\end{array}
\right.
\end{align}
Denote
\[ f(\lambda)=-\frac{\mu^2}{2\sigma^2}\lambda(\lambda-1)+(r-\beta)\lambda+\beta, \]
and let $\lambda_1<0$, $\lambda_2>1$ be the two roots of $f$. Then, the general solution of
\eqref{v2} is
\begin{align*}
v(y)=
\left\{
\begin{array}{ll}
C_1y^{\lambda_1}+C_2y^{\lambda_2}+\frac{1-p}{p\kappa}y^{\frac{p}{p-1}}, & 0<y\leqslant l^{p-1}; \\ [3mm]
D_1y^{\lambda_1}+D_2y^{\lambda_2}+\frac{l^p}{\beta p}-\frac{l}{r}y, & y\geqslant l^{p-1}.
\end{array}
\right.
\end{align*}
Owing to \eqref{V_up} and \eqref{V_downup},
\[ \frac{\kappa^{p-1}(x-x_e)^p}{p}\leqslant V(x)\leqslant \frac{\kappa^{p-1}x^p}{p}. \]
As such,
\[ v(y)=\max\limits_{x>x_e}(V(x)-xy)
\left\{
\begin{array}{ll}
\geqslant \max\limits_{x>x_e}\bigg(\frac{\kappa^{p-1}(x-x_e)^p}{p}-xy\bigg)=\frac{1-p}{p\kappa}y^{\frac{p}{p-1}}-x_ey; \\
\leqslant \max\limits_{x>0}\bigg(\frac{\kappa^{p-1}x^p}{p}-xy\bigg)=\frac{1-p}{p\kappa}y^{\frac{p}{p-1}}.
\end{array}
\right.
\]
Because $f\left(\frac{p}{p-1}\right)=\kappa>0$, 
$\lambda_1<\frac{p}{p-1}<\lambda_2$. 
From the above estimates for $v$, we see $C_1=0$ and $D_2=0$. Therefore,
\begin{align*} 
v(y)=
\left\{
\begin{array}{ll}
Cy^{\lambda_2}+\frac{1-p}{p\kappa}y^{\frac{p}{p-1}}, & 0<y\leqslant l^{p-1}; \\ [3mm]
Dy^{\lambda_1}+\frac{l^p}{\beta p}-\frac{l}{r}y, & y\geqslant l^{p-1},
\end{array}
\right.
\end{align*}
and consequently,
\begin{align*} 
v_y(y)=
\left\{
\begin{array}{ll}
C\lambda_2y^{\lambda_2-1}-\frac{1}{\kappa}y^{\frac{1}{p-1}}, & 0<y< l^{p-1}; \\ [3mm]
D\lambda_1y^{\lambda_1-1}-\frac{l}{r}, & y> l^{p-1}.
\end{array}
\right.
\end{align*}
Using the continuity condition of $v$ and $v_{y}$ at $l^{p-1}$, we obtain
\begin{align*}
\left\{
\begin{array}{ll}
Cl^{\lambda_2(p-1)}+\frac{1-p}{p\kappa}l^p=Dl^{\lambda_1(p-1)}+\frac{l^p}{\beta p}-\frac{l^p}{r}; \\ [3mm]
C\lambda_2l^{(\lambda_2-1)(p-1)}-\frac{1}{\kappa}l
=D\lambda_1l^{(\lambda_1-1)(p-1)}-\frac{l}{r},
\end{array}
\right.
\end{align*}
or equivalently, 
\begin{align*}
\left\{
\begin{array}{ll}
C=\bigg(\frac{\lambda_1-1}{r}-\frac{\lambda_1}{\beta p}+\frac{1-p}{p}\frac{\lambda_1}{\kappa}+\frac{1}{\kappa}\bigg)
\frac{l^{p+\lambda_2(1-p)}}{\lambda_2-\lambda_1}; \\ [3mm]
D=\bigg(\frac{\lambda_2-1}{r}-\frac{\lambda_2}{\beta p}+\frac{1-p}{p}\frac{\lambda_2}{\kappa}+\frac{1}{\kappa}\bigg)
\frac{l^{p+\lambda_1(1-p)}}{\lambda_2-\lambda_1}.
\end{array}
\right.
\end{align*}
In this case, the free boundary point is
\begin{align*}
x^*&=-v_y(y^*)=-v_y(l^{p-1}) = D\lambda_1l^{(\lambda_1-1)(p-1)}-\frac{l}{r}\\
&=\bigg(\frac{\lambda_2-1}{r}-\frac{\lambda_2}{\beta p}+\frac{1-p}{p}\frac{\lambda_2}{\kappa}
+\frac{1}{\kappa}\bigg)
\frac{l^{p+\lambda_1(1-p)}}{\lambda_2-\lambda_1}\lambda_1l^{(\lambda_1-1)(p-1)}-\frac{l}{r}\\
&= \bigg(\frac{\lambda_2-1}{r}-\frac{\lambda_2}{\beta p}+\frac{1-p}{p}\frac{\lambda_2}{\kappa}
+\frac{1}{\kappa}\bigg)
\frac{\lambda_1 l}{\lambda_2-\lambda_1}-\frac{l}{r}.
\end{align*}
And the two regions are ${\cal C}=(x_e,x^*]$ and ${\cal U}=(x^*,+\infty)$.

\end{appendices}

\newpage

\end{document}